\renewcommand\footnotetextcopyrightpermission[1]{} 
\newcommand{\din}{\delta^{\mathrm{in}}}
\newcommand{\dout}{\delta^{\mathrm{out}}}
\newcommand{\DS}{\text{DS}}
\newcommand{\T}{\mathcal{T}}
\DeclareMathOperator\cost{cost}
\DeclarePairedDelimiter{\floor}{\lfloor}{\rfloor}
\DeclarePairedDelimiter{\ceil}{\lceil}{\rceil}
\newtheorem{theorem}{Theorem}[section]
\newtheorem{lemma}[theorem]{Lemma}
  \providecommand\BibTeX{{%
    \normalfont B\kern-0.5em{\scshape i\kern-0.25em b}\kern-0.8em\TeX}}}
\begin{document}

\title{Time- and space-optimal algorithm for the many-visits TSP}
\titlenote{A preliminary version of this paper appeared in the ACM-SIAM Symposium on Discrete Algorithms (SODA) 2019.}

\author{Andr{\'e} Berger}
\affiliation{%
  \institution{Maastricht University}
  \department{Department of Quantitative Economics}
  \city{Maastricht}
  \country{The Netherlands}}
\email{a.berger@maastrichtuniversity.nl}

\author{L{\'a}szl\'{o} Kozma}
\email{laszlo.kozma@fu-berlin.de}
\affiliation{%
  \institution{Freie Universit\"at Berlin}
  \department{Institute of Computer Science}
  \city{Berlin}
  \country{Germany}
}

\author{Matthias Mnich}
\affiliation{%
  \institution{Technische Universit{\"a}t Hamburg}
  \department{Institute for Algorithms and Complexity}
  \city{Hamburg}
  \country{Germany}}
\email{mmnich@tuhh.de}

\author{Roland Vincze}
  \affiliation{%
  \institution{Technische Universit{\"a}t Hamburg}
  \department{Institute for Algorithms and Complexity}
  \city{Hamburg}
  \country{Germany}}
\email{roland.vincze@tuhh.de}


\begin{abstract}
  The many-visits traveling salesperson problem (MV-TSP) asks for an optimal tour of~$n$ cities that visits each city $c$ a prescribed number $k_c$ of times.
  Travel costs may be asymmetric, and visiting a city twice in a row may incur a non-zero cost.
  The MV-TSP problem finds applications in scheduling, geometric approximation, and Hamiltonicity of certain graph families.
  
  The fastest known algorithm for MV-TSP is due to Cosmadakis and Papadimitriou (SICOMP, 1984).
  It runs in time $n^{O(n)} + O(n^3 \log \sum_c k_c )$ and requires $n^{\Theta(n)}$ space.
  An interesting feature of the Cosmadakis-Papadimitriou algorithm is its \emph{logarithmic} dependence on the total length $\sum_c k_c$ of the tour, allowing the algorithm to handle instances with very long tours. 
  The \emph{superexponential} dependence on the number of cities in both the time and space complexity, however, renders the algorithm impractical for all but the narrowest range of this parameter.

  In this paper we improve upon the Cosmadakis-Papadimitriou algorithm, giving an MV-TSP algorithm that runs in 
time $2^{O(n)}$, i.e.\ \emph{single-exponential} in the number of cities, using \emph{polynomial} space.
  The space requirement of our algorithm is (essentially) the size of the output, and assuming the Exponential-Time Hypothesis (ETH), the problem cannot be solved in time $2^{o(n)}$. 
  Our algorithm is deterministic, and arguably both simpler and easier to analyse than the original approach of Cosmadakis and Papadimitriou.
  It involves an optimization over directed spanning trees and a recursive, centroid-based decomposition of trees.
\end{abstract}

\keywords{TSP, spanning trees, high-multiplicity scheduling}

\maketitle

\thispagestyle{empty}
\pagestyle{empty}

\section{Introduction}
\label{sec:introduction}
The traveling salesperson problem (TSP) is one of the cornerstones of combinatorial optimization, with origins going back (at least) to the $19$th century work of Hamilton. For surveys on the rich history, variants, and current status of TSP we refer to the dedicated books~\cite{Lawler1985,GutinEtAl2002,Cook2011,ApplegateEtAl2006}.
In the standard TSP, given $n$ cities and their pairwise distances, we seek a tour of minimum total distance that visits each city.
If the distances obey the triangle inequality, then an optimal tour necessarily visits each city \emph{exactly once} (apart from returning to the starting city in the end).
In the general case of the TSP with arbitrary distances, the optimal tour may visit a city multiple times.
Instances with non-metric distances arise from various applications that are modeled by the TSP, e.g.\ from scheduling problems. 

To date, the fastest known exact algorithms for TSP (both in the metric and non-metric cases) are due to Bellman~\cite{Bellman1962} and Held and Karp~\cite{HeldKarp1962}, running in time $O(2^n n^2)$ for $n$-city instances; both algorithms also require space $\Omega(2^n)$.

In this paper we study the general problem where each city has to be visited \emph{exactly} a given number of times.
More precisely, we are given a set $V$ of $n$ vertices, with pairwise distances (or costs) $d_{ij} \in \mathbb{N} \cup \{\infty \}$, for all $i,j \in V$.
No further assumptions are made on the values $d_{ij}$, in particular, they may be asymmetric, i.e.\ $d_{ij}$ may not equal $d_{ji}$, and the cost $d_{ii}$ of a self-loop may be non-zero.
Also given are integers $k_i \geq 1$ for $i \in V$, which we refer to as \emph{multiplicities}.
A valid tour of length $k$ is a sequence $(x_1, \dots, x_k) \in V^k$, where $k=\sum_{i \in V}{k_i}$, such that each $i \in V$ appears in the sequence exactly $k_i$ times.
The cost of the tour is $\sum_{i=1}^{k-1}{d_{x_i,x_{i+1}}} + d_{x_k,x_1}$. Our goal is to find a valid tour with minimum cost. 

The problem is known as the \emph{many-visits TSP} (MV-TSP).
As an alternative name, \emph{high-multiplicity TSP} also appears in the literature.
When $k_i = 1$ for all $i \in V$, MV-TSP includes the standard metric TSP as a special case and thus, unless $\mathsf{P} = \mathsf{NP}$, it cannot be solved in polynomial time.
Nonetheless, as discussed later, the problem can be solved efficiently when the number of cities (vertices) $n$ is small, even if the length $k$ of the tour is very large (possibly exponential in $n$).

As a natural TSP-generalization, MV-TSP is a fundamental problem of independent interest.
In addition, MV-TSP has proved to be useful for modeling other problems, particularly in scheduling~\cite{Psaraftis1980, HochbaumShamir1991,BraunerEtAl2005,vanderVeenZhang1996}.
Suppose there are $k$ jobs of $n$ different types to be executed on a single, universal machine.
Processing a job, as well as switching to another type of job come with certain costs, and the goal is to find the sequence of jobs with minimal total cost.
Modeling this problem as a MV-TSP instance is straightforward, by letting $d_{ij}$ denote the cost of processing a job of type $i$ together with the cost of switching from type $i$ to type $j$. (Note that $d_{ii}$ is not necessarily smaller than $d_{ij}$ for $i \neq j$; in some applications it is beneficial to switch between job types.)
Emmons and Mathur~\cite{EmmonsMathur1995} also describe an application of MV-TSP to the no-wait flow shop problem.

A different kind of application comes from geometric approximation. 
A standard technique to approximate geometric optimization problems is to reduce the size of the input by grouping certain input points together. Each group is then replaced by a single representative, and the reduced instance is solved exactly.
For example, we may snap input points to nearby grid points, if doing so does not significantly affect the objective cost.
Recently, this technique was used by Kozma and M\"omke, to give an efficient polynomial-time approximation scheme (EPTAS) for the {\sc Maximum Scatter TSP} in doubling metrics~\cite{KozmaMomke2017}, addressing an open question of Arkin et al.~\cite{ArkinEtAl1999}.
In this case, the reduced problem is exactly the MV-TSP.
Yet another application of MV-TSP is in settling the parameterized complexity of finding a Hamiltonian cycle in a graph class with restricted neighborhood structure~\cite{Lampis2012}.

To the best of our knowledge, MV-TSP was first considered in 1966 by Rothkopf~\cite{Rothkopf1966}.
In 1980, Psaraftis~\cite{Psaraftis1980} gave a dynamic programming algorithm with run time $O(n^2 \cdot \prod_{i \in V}{(k_{i}+1)})$.
Observe that this quantity may be as high as $(k/n+1)^n$, which is prohibitive even for moderately large values of $k$.
In 1984, Cosmadakis and Papadimitriou~\cite{CosmadakisPapadimitriou1984} observed that MV-TSP can be decomposed into a connectivity subproblem and an assignment subproblem.
Taking advantage of this decomposition, they designed a family of algorithms, the best of which has run time $O^{\ast}(n^{2n} 2^n + \log{k})$.\footnote{Here, and in the following, the $O^{\ast}(\cdot)$ notation is used to suppress a factor bounded by a polynomial in $n$.}
This result can be seen as an early example of \emph{fixed-parameter tractability}, where the rapid growth in complexity is restricted to a certain parameter. 

The algorithm of Cosmadakis and Papadimitriou is, to date, the fastest solution to MV-TSP.\footnote{It may seem that a linear dependence on the length $k$ of the tour is necessary even to output the result.
    Observe however, that a tour can be compactly represented by collapsing cycles and storing them together with their multiplicities.}
Its analysis is highly non-trivial, combining graph-theoretic insights and involved estimates of various combinatorial quantities. The analysis of the Cosmadakis-Papadimitriou algorithm is not known to be tight, 
but a lower bound of the form $2^{\Omega(n \log{n})}$ is known to hold for its run time~\cite[p.\ 104]{CosmadakisPapadimitriou1984}. 
Similarly, in the space requirement of the algorithm, a term of the form $2^{\Omega(n \log{n})}$ appears hard to avoid. 
While it extends the tractability of TSP to a new range of parameters, the usefulness of the Cosmadakis-Papadimitriou algorithm is limited by its superexponential\footnote{In this paper the term \emph{superexponential} always refers to a quantity of the form $2^{\omega{(n)}}$.} dependence on $n$ in the run time, with 
the issue of superexponential \emph{space} perhaps even more worrisome.  

There have been further studies of the MV-TSP problem.
Van der Veen and Zhang \cite{vanderVeenZhang1996} discuss a problem equivalent to MV-TSP, called \emph{K-group TSP}, and describe an algorithm with polylogarithmic dependence on the number $k$ of visits, similarly to Cosmadakis and Papadimitriou.
The value~$n$ however is assumed constant, and its effect on the run time is not explicitly computed (the dependence can be seen to be superexponential). 
Finally, Grigoriev and van de Klundert~\cite{GrigorievvandeKlundert2006} give an ILP formulation for MV-TSP with $O(n^2)$ variables.
Applying Kannan's improvement~\cite{Kannan1983} of Lenstra's algorithm~\cite{Lenstra1983} for solving fixed-dimensional ILPs to this formulation yields an algorithm with run time $n^{O(n^2)} \log{k}$.
Further ILP formulations for MV-TSP are due to Sarin et al.~\cite{SarinEtAl2011} and Aguayo et al.~\cite{AguayoEtAl2018}, both of which again require superexponential time to be solved by standard algorithms. For further details about the history of the MV-TSP problem we refer to the TSP textbook of Gutin and Punnen~\cite[\S\,11.10]{GutinEtAl2002}.

\paragraph*{Our results.}
Our main result improves both the time and space complexity of the best known algorithm for MV-TSP, the first improvement in over 35 years.
Specifically, we show that a \emph{logarithmic} dependence on the number $k$ of visits, a \emph{single-exponential} dependence on the number~$n$ of cities, and a \emph{polynomial} space complexity are simultaneously achievable.
Moreover, while we build upon ideas from the previous best approach, our algorithm is arguably easier to describe, easier to implement, and easier to analyse than its predecessor.
To introduce the techniques step by step, we describe \emph{three} algorithms for solving MV-TSP.
We refer to the three algorithms, after their core subroutines, as \textsc{enum-MV}, \textsc{dp-MV}, and \textsc{dc-MV}. The acronyms stand for enumeration, dynamic programming, and divide and conquer.
We also describe an improved variant of the third algorithm, called \textsc{dc-MV2}. All our algorithms are deterministic.
Their complexities are summarized in Theorem~\ref{mainthm}, proved in~\S\,\ref{sec:improvedalgorithmsforthemanyvisitstsp}. 

\begin{theorem}
    \label{mainthm}  \ \\
\vspace{-0.2in}
\begin{enumerate}[(i)]
\item \textsc{enum-MV} solves MV-TSP using space $O(n^2)$, in time $O^{\ast}(n^n + \log k)$. 

\item \textsc{dp-MV} solves MV-TSP using space $O(5^n)$, in time $O^{\ast}(5^n + \log k)$. 

\item \textsc{dc-MV2} solves MV-TSP using space $O(n^2)$, in time $O^{\ast}(16^{n + o(n)} + \log k)$. 
\end{enumerate}

\end{theorem}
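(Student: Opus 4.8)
My plan is to recast MV-TSP as the problem of finding a minimum-cost \emph{connected Eulerian} directed multigraph on $V$ in which every vertex $i$ has in- and out-degree exactly $k_i$: a valid tour is precisely an Eulerian circuit of such a multigraph, and its cost equals $\sum_{i,j} x_{ij}\,d_{ij}$, where $x_{ij}$ is the multiplicity of arc $(i,j)$. Dropping connectivity leaves a pure \emph{transportation} (degree-constrained min-cost flow) problem, which I would solve with a strongly polynomial min-cost-flow algorithm in $\mathrm{poly}(n)$ time and $O(n^2)$ space, with numbers of bit-length $O(\log k)$; this is the only source of $k$-dependence and contributes merely low-order factors. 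All the difficulty is thus in enforcing connectivity, so my first step is a structural lemma: since in- and out-degrees coincide at every vertex, the multigraph is (weakly, hence strongly) connected iff its support contains a spanning out-arborescence rooted at an arbitrary fixed vertex $r$. Consequently $\mathrm{OPT}=\min_{T}\mathrm{Transport}^{\ge T}$, where $T$ ranges over out-arborescences rooted at $r$ and $\mathrm{Transport}^{\ge T}$ forces a lower bound of $1$ on every arc of $T$.

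Writing $x_{ij}=[\,ij\in T\,]+y_{ij}$ shows that forcing $T$ contributes the fixed amount $\cost(T)=\sum_{ij\in T}d_{ij}$ and leaves a residual transportation problem with supplies $k_i-\dout_T(i)$ and demands $k_i-\din_T(i)$. Because $T$ is an arborescence rooted at $r$, the demands are identical for every $T$ (namely $k_i-1$ for $i\ne r$ and $k_r$ for $r$), so only the out-degrees of $T$ enter the residual instance, giving the clean form $\mathrm{OPT}=\min_T\bigl[\cost(T)+\Phi(\dout_T)\bigr]$ with $\Phi$ evaluable by a single transportation solve. For \textsc{enum-MV} I would iterate over all $n^{n-2}$ arborescences rooted at $r$ by recursive, stack-based backtracking over acyclic parent assignments (so only $O(n^2)$ working space is used), evaluate $\cost(T)+\Phi(\dout_T)$ for each feasible $T$ (discarding any with a negative residual supply), and return the minimum. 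Correctness is immediate from the lemma: the optimum's support contains some such $T^\ast$, so the $T^\ast$-forced problem attains $\mathrm{OPT}$, while every forced problem is at least $\mathrm{OPT}$. The time is $n^{n-2}$ times a polynomial, i.e.\ $O^{\ast}(n^n)$, with space $O(n^2)$, proving part~(i).

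To obtain parts~(ii) and~(iii) I would compute the same minimum without materialising individual arborescences, exploiting that connectivity has been reduced to choosing one tree-structured object. For \textsc{dp-MV} the plan is a dynamic program over vertex subsets that builds the out-arborescence by joining smaller rooted sub-arborescences at a shared vertex; the merge step ranges over pairs of subsets of each $S$, and the accounting $\sum_{S\subseteq V}4^{|S|}=5^n$ bounds both time and space, giving $O^{\ast}(5^n)$. For \textsc{dc-MV} and its refinement \textsc{dc-MV2} the plan is to keep space polynomial by replacing the table with recursion: pick a \emph{centroid} $c$ of the sought arborescence (so the remaining vertices split into balanced parts of size at most $2n/3$), branch over the bipartition of $V\setminus\{c\}$ that $c$ induces, and recurse on each side, storing only the $O(\log n)$-depth recursion stack. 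Tuning the branching (the role of \textsc{dc-MV2}) makes the recurrence solve to $O((16+\varepsilon)^n)$ time with $O(n^2)$ space, proving part~(iii).

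The step I expect to be hardest, and where the analyses of \textsc{dp-MV} and \textsc{dc-MV} really live, is reconciling the \emph{global} assignment objective with the subset/recursive decomposition of the tree: because the transportation cost is not additive over the arborescence, each subproblem must be summarised by an interface of polynomial size that is nevertheless rich enough to recover the optimal assignment at the top level. Designing this interface, proving that the minimum over decompositions still equals $\mathrm{OPT}$, and then counting (subset, interface) pairs and balanced centroid bipartitions tightly enough to reach \emph{exactly} the bases $5$ and $16$ is the technical heart. Alongside this I would discharge the routine obligations: filtering degree sequences with negative residual supplies, verifying that the centroid recursion visits every relevant arborescence within polynomial space, and checking that the strongly polynomial transportation subroutine keeps every bound's dependence on $k$ logarithmic rather than linear.
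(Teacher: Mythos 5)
Your part (i) is correct and essentially the paper's \textsc{enum-MV}: your arc-forcing formulation $\mathrm{OPT}=\min_T\bigl[\cost(T)+\Phi(\dout_T)\bigr]$ is equivalent to the paper's additive decomposition $G=T+X$ (Lemmas~\ref{lem3} and~\ref{lem2}), and your observation that rooting at a fixed $r$ freezes all in-degrees is sound. The genuine gap is in parts (ii) and (iii), where you explicitly defer the ``technical heart'' --- and that deferred step is exactly where the paper's main idea lives. The interface you are looking for is simply the \emph{degree sequence}: since $\Phi$ depends only on $(\dout_T,\din_T)$, the paper makes the outer loop range over all feasible tree degree sequences, characterized by Lemma~\ref{lem:deg_seq_dir} and counted by $\DS(n)=O(4^n)$, and reduces the inner task to computing a minimum-cost tree realizing a \emph{fixed} degree sequence --- a problem with no non-additivity issue at all, since the transportation cost is then a constant of the outer iteration. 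The paper's DP ($\T[\dout,\din]$) is indexed by degree sequences (subsets of $V$ arise by zeroing degrees), recursing by peeling off the lowest-index leaf and choosing its parent; the base $5$ comes from $\sum_k\binom{n}{k}\DS(k)=O(5^n)$. Your proposed subset-DP that ``joins sub-arborescences at a shared vertex'' over pairs of subsets does not, as described, retain the out-degree information that $\Phi$ needs, so your $5^n$ accounting is numerically coincident with the paper's but unjustified in your framework.

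For part (iii), ``tuning the branching'' of a single-centroid recursion cannot reach base $16$: with a $2/3$-balanced split and one boundary vertex, the recurrence $t(n)\le 2^n\cdot\mathrm{poly}(n)\cdot t(2n/3+1)$ solves to $8^n$ up to quasi-polynomial factors, and $\DS(n)\cdot 8^n$ gives $O((32+\varepsilon)^n)$ --- that is the paper's \textsc{dc-MV}, not \textsc{dc-MV2}. The missing ingredient is a strictly stronger separator statement (Lemma~\ref{lemsep2}): every $n$-vertex tree admits a \emph{perfectly} balanced partition ($\le\lceil n/2\rceil$ per side) in which all crossing edges are incident to at most $\lceil\log_2 n\rceil$ vertices on one side. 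The algorithm then guesses these $O(\log n)$ boundary vertices and the distribution of the excess degrees among them ($n^{O(\log n)}$ and $n^{O(\log^2 n)}$ choices, absorbed into quasi-polynomial factors), adds virtual copies $w_1,\dots,w_k$ plus a virtual root $w_0$ on the $V_2$ side to turn the forest there into a single tree instance, and recurses on sides of size $n/2+O(\log n)$; the recurrence then solves to $4^n\cdot n^{O(\log^5 n)}$, and $\DS(n)\cdot 4^n=O((16+\varepsilon)^n)$. Without this lemma, the excess-distribution guessing, and the virtual-vertex construction, part (iii) as you state it is not proved.
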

The Exponential-Time Hypothesis (ETH)~\cite{ImpagliazzoEtAl2001, Impagliazzo2} implies that TSP cannot be solved in $2^{o(n)}$, i.e.\ sub-exponential, time.
Under this hypothesis, the run time of our algorithm \textsc{dc-MV2} is asymptotically optimal for MV-TSP, up to the base of the exponential.
Further, note that the space requirement of {\sc dc-MV2} is also (essentially) optimal, as a compact solution encodes for each of the $\Omega(n^2)$ edges the number $t$ of times that this edge is traversed by an optimal tour. Note that throughout the paper we assume that each multiplicity can be stored in a constant number of machine words; if this is not the case, e.g.\ if $k$ is exponential in $n$, a factor $O(\log{k})$ should be applied to the given space bounds.

Our result leads to improvements in applications where MV-TSP is solved as a subroutine. For instance, as a corollary of Theorem~\ref{mainthm}\emph{(iii)}, the approximation scheme for Maximum Scatter TSP~\cite{KozmaMomke2017} can now be implemented in space \emph{polynomial} in the error parameter $\varepsilon$. 

It is interesting to contrast our results for MV-TSP with recent results for the \emph{$r$-simple path} problem, where a path of length $k$ is sought that visits each vertex \emph{at most} $r$ times.
For that problem, the fastest known algorithms---due to Abasi et al.~\cite{AbasiEtAl2014} and Gabizon et al.~\cite{GabizonEtAl2015}---have run time~\emph{exponential} in $k/r \cdot \log{r}$, and such exponential dependence is  necessary assuming ETH.

\paragraph*{Overview of techniques.}
The Cosmadakis-Papadimitriou algorithm is based on the following high-level insight, common to most work on the TSP problem, whether exact or approximate.
The task of finding a valid tour may be split into two separate tasks: (1) finding a structure that connects all vertices, and (2) augmenting the structure found in (1) in order to ensure that each vertex is visited the required number of times. 
Indeed, such an approach is also used in the well-known $3/2$-approximation algorithm of Christofides for metric TSP~\cite{Christofides1976}.
There, the structure that guarantees connectivity is a minimum spanning tree, and ``visitability'' is  enforced by the addition of a perfect matching that connects odd-degree vertices, ensuring that all vertices have even degree, and can thus be entered and exited, as required.

In the case of MV-TSP, Cosmadakis and Papadimitriou ensure connectivity (part (1)) by finding a \emph{minimal connected Eulerian digraph} on the set $V$ of input vertices.
Indeed, such a digraph must be part of every solution, since a tour must balance every vertex (equal out-degree and in-degree), and all vertices must be mutually reachable.
Minimality is meant here in the sense that no proper subgraph is Eulerian, and is required only to reduce the search space. 
Assuming that a connected Eulerian subdigraph of the solution is found, it needs to be extended to the edge set of a valid tour (part (2)).
If this is done with the cheapest possible set of edges, then the optimum must have been found.
This second step amounts to solving a transportation problem, which takes polynomial time.

The first step, however, requires us to implicitly consider all possible minimal Eulerian digraphs.
As it is $\mathsf{NP}$-complete to test the non-minimality of an Eulerian digraph~\cite{PapadimitriouYannakakis1981}, the authors relax minimality and suggest the use of heuristics for pruning out non-minimal instances in practice.
On the other hand, they obtain a saving in run time by observing that among all digraphs with the same \emph{degree sequence} only one with smallest cost needs to be considered.
Otherwise, in the final tour, the connected Eulerian subdigraph could be swapped with a cheaper one, while maintaining the validity of the tour.

Cosmadakis and Papadimitriou thus iterate over feasible degree sequences of connected Eulerian digraphs; for each such degree sequence they construct the cheapest realization (which may not be minimal) by dynamic programming; finally they construct, for each obtained Eulerian digraph, the cheapest extension to a valid tour, by solving a transportation problem.
The cheapest tour found over all iterations is returned as the solution. 
Iterating and optimizing over these structures is no easy task, and Cosmadakis and Papadimitriou invoke a number of graph-theoretic and combinatorial insights. For estimating the total cost of their procedure a sophisticated global counting argument is developed.

The key insight of our approach is that the machinery involving Eulerian digraphs is \emph{not necessary} for solving MV-TSP.
To ensure connectivity (i.e.\ task (1) above), a \emph{directed spanning tree} is sufficient. 
This may seem surprising, as a directed tree fails to satisfy the main property of connected Eulerian digraphs, \emph{strong connectivity}. 
Observe however, that a collection of directed edges with the same out-degrees and in-degrees as a valid MV-TSP tour is itself a valid tour, as long as the digraph determined by the collection of edges is weakly connected. 
Requiring the solution to contain a tree is sufficient to avoid the case of disjoint cycles.
The fact that the tree can be assumed to be \emph{rooted}, i.e.\ all of its edges are directed away from some vertex, follows from the strong connectedness of the tour.\footnote{We thank Andreas Bj{\"o}rklund for the latter observation which led to an improved run time and a simpler correctness argument.} 

Directed spanning trees are easier to enumerate and optimize over than Eulerian digraphs; this fact alone explains the reduced complexity of our approach.
However, to obtain our main result, further ideas are needed.
In particular, we find the cheapest directed spanning tree that is feasible for a given degree sequence, first by dynamic programming, then by a recursive partitioning of trees, based on centroid-decompositions. 

\section{Improved algorithms for the Many-Visits TSP}
\label{sec:improvedalgorithmsforthemanyvisitstsp}
In this section we describe and analyse our three algorithms.
The first, \textsc{enum-MV} is based on exact enumeration of trees (\S\,\ref{secenum}), the second, \textsc{dp-MV} uses a dynamic programming approach to find an optimal tree (\S\,\ref{secdp}), and the third, \textsc{dc-MV} is based on divide and conquer (\S\,\ref{secdc} and \S\,\ref{secdc2}). 
Before presenting the algorithms, we introduce some notation and structural observations that are subsequently used (\S\,\ref{sec:treestoursanddegreesequences}).

\subsection{Trees, tours, and degree sequences}
\label{sec:treestoursanddegreesequences}
Let $V$ be a set of vertices.
We view a \emph{directed multigraph} $G$ with vertex set $V$ as a \emph{multiset} of edges (i.e.\ elements of $V \times V$).
Accordingly, self-loops and multiple copies of the same edge are allowed.
The \emph{multiplicity} of an edge $(i,j)$ in a directed multigraph $G$ is denoted $m_G(i,j)$.
The \emph{out-degree} of a vertex $i \in V$ is $\dout_G(i) = \sum_{j \in V}{m_G(i,j)}$, the \emph{in-degree} of a vertex $i \in V$ is $\din_G(i) = \sum_{j \in V}{m_G(j,i)}$.
Given edge costs $d: V \times V \rightarrow \mathbb{N} \cup \{\infty\}$, the \emph{cost} of $G$ is simply the sum of its edge costs, i.e.\ $\cost(G) = \sum_{i,j \in V} m_G(i,j) \cdot d(i,j)$. 

For two directed multigraphs $G$ and $H$ over the same vertex set $V$, let~$G+H$ denote the directed multigraph obtained by adding the corresponding edge multiplicities of $G$ and~$H$.
Observe that as an effect, out-degrees and in-degrees are also added pointwise.
Formally, $m_{G+H}(i,j) = m_G(i,j) + m_H(i,j)$, and $\dout_{G+H}(i) = \dout_G(i) + \dout_H(i)$, and $\din_{G+H}(i) = \din_G(i) + \din_H(i)$, for all $i,j \in V$. The relation $\cost(G+H) = \cost(G) + \cost(H)$ clearly holds.

In the following, for a directed multigraph $G$ we refer to its \emph{underlying graph}, i.e.\ to the undirected graph consisting of those edges $\{i,j\}$ for which $m_G(i,j) + m_G(j,i) \geq 1$. 

Consider a tour $C = (x_1, \dots, x_k) \in V^k$.
We refer to the unique directed multigraph $G$ consisting of the edges $(x_1, x_2), \dots, (x_{k-1},x_k), (x_k,x_1)$ as the \emph{edge set of} $C$.
We state a simple but crucial observation.
\begin{lemma}
\label{lem1}
  Let $G$ be a directed multigraph over $V$ with out-degrees $\dout_G(\cdot)$ and in-degrees~$\din_G(\cdot)$. 
  Then $G$ is the edge set of a tour that visits each vertex $i \in V$ exactly $k_i$ times
  if and only if both of the following conditions hold:
  \begin{enumerate}[(i)]
    \item the underlying graph of $G$ is connected, and
    \item for all $i \in V$, we have $\dout_G(i) = \din_G(i) = k_i$.
  \end{enumerate}
\end{lemma}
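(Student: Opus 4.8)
The plan is to recognize the statement as the classical characterization of Eulerian circuits in directed multigraphs, with the visit counts tracked through the degrees. Unwinding the definitions, the edge set $G$ of a tour $C=(x_1,\dots,x_k)$ is exactly the multiset of edges traversed by the closed walk $x_1 \to x_2 \to \cdots \to x_k \to x_1$, each used once; conversely, a directed multigraph is the edge set of some tour precisely when it admits a closed walk using every edge exactly once, i.e.\ an Eulerian circuit. Thus the two directions amount to: (a) an Eulerian circuit forces balanced degrees and connectivity, and (b) balance together with connectivity yields an Eulerian circuit, from which the visit counts are read off.

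For the forward direction ($\Rightarrow$), assume $G$ is the edge set of such a tour. I would establish (ii) by direct counting: fix $i\in V$ and consider the positions $t$ with $x_t=i$ (indices taken cyclically). Each such position contributes exactly one outgoing edge $(x_t,x_{t+1})$ and one incoming edge $(x_{t-1},x_t)$ to $G$, and every edge of $G$ arises this way exactly once; hence $\dout_G(i)=\din_G(i)$ equals the number of occurrences of $i$, namely $k_i$. For (i), the closed walk links every pair of cyclically consecutive vertices, so all visited vertices lie in a single connected component of the underlying graph; since each $k_i\ge 1$ every vertex is visited, and connectivity follows.

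The substance lies in the backward direction ($\Leftarrow$), which I would prove constructively in the spirit of Hierholzer's algorithm. Starting from any vertex and repeatedly following an unused outgoing edge, the balance condition $\dout_G(\cdot)=\din_G(\cdot)$ guarantees that the walk can only become stuck back at its starting vertex, so it closes into a closed trail $W$ using some subset of the edges. If $W$ does not yet exhaust $G$, I would invoke connectivity of the underlying graph to locate a vertex $v$ on $W$ that still has an unused outgoing edge; tracing a second closed trail from $v$ (again possible by balance) and splicing it into $W$ at $v$ produces a longer closed trail. Iterating until all edges are consumed yields a single closed trail traversing every edge of $G$ exactly once. Its vertex sequence defines a tour $C$, and since the trail leaves each vertex $i$ exactly $\dout_G(i)=k_i$ times, $C$ visits $i$ exactly $k_i$ times and has edge set $G$, as required.

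The main obstacle is the splicing step, specifically justifying that an unused edge can always be attached to the current trail: this is exactly where connectivity is essential, since balance alone would permit $G$ to split into several edge-disjoint closed trails on disjoint vertex sets (e.g.\ two vertex-disjoint directed cycles). I would make this rigorous by selecting, among the edges not yet used, one incident in the underlying graph to a vertex already on $W$ — such an edge exists precisely because the underlying graph is connected and $W$ is nonempty — and then using the balance condition at that vertex to launch the new closed trail. Minor care is needed for self-loops (which add $1$ to both degrees and appear as trivial excursions) and for the degenerate case $|V|=1$, but neither affects the argument.
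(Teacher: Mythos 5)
Your proposal is correct and takes essentially the same route as the paper: the paper proves this lemma in two lines by invoking Euler's theorem (connectivity plus $\dout_G(i)=\din_G(i)$ characterize edge sets traversable by a closed walk using each edge once, cited to Bang-Jensen--Gutin) and then reading off the visit counts from the degrees, which is exactly your counting step. The only difference is that you additionally prove Euler's theorem yourself via a Hierholzer-style construction rather than citing it, and your treatment of the one delicate point there is sound --- connectivity yields an unused edge incident to the current trail, and balance of the remaining multigraph lets you launch and splice in a new closed trail at that vertex.
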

\begin{proof}
  The fact that connectedness of $G$ and $\dout_G(i) = \din_G(i)$ is equivalent with the existence of a tour that uses each edge of $G$ exactly once is the well-known ``Euler's theorem''. (See~\cite[Thm~1.6.3]{BangJensenGutin2002} for a short proof.)
  Clearly, visiting each vertex $i$ exactly $k_i$ times is equivalent with the condition that the tour contains $k_i$ edges of the form $(\cdot,i)$ and $k_i$ edges of the form $(i,\cdot)$.
\end{proof}

Moreover, given the edge set $G$ of a tour $C$, a tour $C'$ with edge set $G$ can easily be recovered.
This amounts to finding an Eulerian tour of $G$, which can be done in time linear in the length $k$ of the tour.
To avoid a linear dependence on $k$, we can apply the algorithm of Grigoriev and van de Klundert~\cite{GrigorievvandeKlundert2006} that constructs a compact representation of $C'$ in time $O(n^4 \log{k})$.
As the edge sets of $C$ and $C'$ are equal, $C'$ also visits each $i \in V$ exactly $k_i$ times and $\cost(C') = \cost(C) = \cost(G)$.
Thus, in solving MV-TSP we only focus on finding a minimum cost directed multigraph whose underlying undirected graph is connected, and whose degrees match the multiplicities required by the problem. 

A \emph{directed spanning tree} of $V$ is a tree with vertex set $V$ whose edges are directed \emph{away} from some vertex $r \in V$; in other words, the tree contains a directed path from $r$ to every other vertex in $V$. 
(Directed spanning trees are alternatively called \emph{branchings}, \emph{arborescences}, or \emph{out-trees}.)
We refer to the vertex $r$ as the \emph{root} of the tree.
We observe that every valid tour contains a directed spanning tree. 

\begin{lemma}
\label{lem3}
  Let $G$ be the edge set of a tour of~\;$V$ with arbitrary non-zero multiplicities, and let $r \in V$ be an arbitrary vertex.
  Then there is a directed spanning tree $T$ of~\;$G$ rooted at $r$, and a directed multigraph~$X$, such that $G = T + X$. 
\end{lemma}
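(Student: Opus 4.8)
The plan is to show that the edge set $G$ is strongly connected, and then to extract the desired out-tree by a directed search from $r$. First I would invoke Lemma~\ref{lem1}: since $G$ is the edge set of a valid tour, its underlying graph is connected and $\dout_G(i) = \din_G(i)$ for all $i \in V$. By Euler's theorem (as stated in the proof of Lemma~\ref{lem1}), these two conditions guarantee that $G$ admits an Eulerian circuit. Starting this circuit at $r$ and following it traverses every vertex of $V$, so every vertex is reachable from $r$ along a directed walk in $G$; equivalently, $G$ is strongly connected.

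Second, I would construct $T$ by running a directed breadth-first (or depth-first) search from $r$, using only the \emph{presence} of edges in $G$, i.e.\ edges $(i,j)$ with $m_G(i,j) \geq 1$. Since every vertex is reachable from $r$, the search visits all $n$ vertices, and the $n-1$ tree edges it selects form a directed path from $r$ to every vertex; this is precisely a directed spanning tree rooted at $r$. Each selected edge $(i,j)$ satisfies $m_G(i,j) \geq 1$, so setting $m_T(i,j) = 1$ on the tree edges and $m_T(i,j) = 0$ otherwise yields $m_T(i,j) \leq m_G(i,j)$ for all $i,j \in V$. Finally I would define $X$ by $m_X(i,j) = m_G(i,j) - m_T(i,j)$; each value is a nonnegative integer, so $X$ is a valid directed multigraph, and $G = T + X$ holds by construction.

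The main obstacle is the first step, namely showing that $r$ can reach every vertex of $G$; this is exactly where the balance condition $\dout_G(i) = \din_G(i)$ is indispensable. A merely weakly connected digraph need not be strongly connected, in which case no out-tree rooted at an arbitrarily chosen $r$ is guaranteed to sit inside $G$ (consider a directed cycle together with a chord oriented the wrong way). The Eulerian structure supplied by Lemma~\ref{lem1} rules out this possibility and is the crux of the argument; the search and the subtraction defining $X$ are then routine.
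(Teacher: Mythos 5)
Your proof is correct and follows essentially the same route as the paper's: both arguments rest on the observation that in a valid tour every vertex is reachable from $r$ (you derive this via Lemma~\ref{lem1} and Euler's theorem, the paper simply notes it directly from the tour being a closed walk), and both then extract the out-tree by a search from $r$ --- the paper uses a single-source shortest-path tree, which is what your BFS variant produces, and your explicit multiplicity subtraction defining $X$ is left implicit in the paper.
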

\begin{proof}
  We choose $T$ to be the single-source \emph{shortest path tree} in $G$ with source $r$.
  More precisely, let $d(v)$ denote the \emph{distance} (i.e.\ number of edges) from $r$ to $v$ in $G$. Observe that in a valid tour all vertices are mutually reachable, so $d(v)$ is finite for all $v \in V$. Let us now build $T$, by adding, for each vertex $v \in V \setminus \{r\}$, an edge $(w,v)$, where $w \in V$ is an arbitrary vertex such that $d(w) = d(v) - 1$. Such a $w$ must exist, as the predecessor of $v$ on a shortest path from $r$ has this property. The fact that $d(\cdot)$ strictly increases along edges ensures that $T$ is cycle-free, i.e.\ a tree. 
\end{proof}

We can thus split the MV-TSP problem into finding a directed spanning tree $T$ with an arbitrary root $r$ and an extension $X$, such that $T+X$ is a valid tour.
We claim that in the decomposition $G = T + X$ of an optimal tour $G$, both $T$ and~$X$ are optimal with respect to their degree sequences.
\begin{lemma}
\label{lem2}
  Let $G$ be the edge set of an optimal tour for MV-TSP, let $T$ be a directed spanning tree, and let $X$ be a directed multigraph such that $G=T+X$.
  Then, $T$ has the smallest cost among all directed spanning trees with degrees $\dout_T(\cdot)$ and $\din_T(\cdot)$, and $X$ has the smallest cost among all directed multigraphs with degrees $\dout_X(\cdot)$ and $\din_X(\cdot)$.
\end{lemma}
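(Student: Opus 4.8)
The plan is to prove both statements by a single exchange argument, in each case replacing one summand of the decomposition $G = T + X$ by a hypothetical cheaper object with the same degree sequence, and then invoking Lemma~\ref{lem1} to certify that the resulting multigraph is again the edge set of a valid tour. Since this new tour would be strictly cheaper than $G$, we contradict the optimality of $G$. Throughout, the two facts I rely on are that edge multiplicities and degrees add pointwise under $+$, and that $\cost(G+H) = \cost(G) + \cost(H)$, as recorded in \S\,\ref{sec:treestoursanddegreesequences}.

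First I would prove the claim for $T$. Suppose for contradiction that there is a directed spanning tree $T'$ with the same out- and in-degree sequences as $T$ but with $\cost(T') < \cost(T)$, and set $G' = T' + X$. Because degrees add pointwise, for every $i \in V$ we have $\dout_{G'}(i) = \dout_{T'}(i) + \dout_X(i) = \dout_T(i) + \dout_X(i) = \dout_G(i) = k_i$, and symmetrically $\din_{G'}(i) = k_i$; this is condition (ii) of Lemma~\ref{lem1}. Condition (i) holds as well, since $T'$ is a spanning tree and hence its underlying graph already connects all of $V$; adding the edges of $X$ can only preserve connectivity. Thus $G'$ is the edge set of a valid tour, yet $\cost(G') = \cost(T') + \cost(X) < \cost(T) + \cost(X) = \cost(G)$, contradicting optimality.

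The argument for $X$ is symmetric. Given a directed multigraph $X'$ with the same degrees as $X$ and $\cost(X') < \cost(X)$, I set $G'' = T + X'$. The degree computation is identical and again yields condition (ii). For condition (i), connectivity is now guaranteed by the \emph{unchanged} summand $T$: its underlying graph spans $V$ and is connected, so the underlying graph of $G''$ is connected regardless of $X'$. Hence $G''$ is a valid tour of cost strictly less than $G$, again a contradiction.

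The argument is short, and the only point requiring care is verifying connectivity after each swap. This is precisely where decomposing a tour into a spanning tree plus a remainder pays off: in both swaps a full spanning tree is present in the combined multigraph (either the new $T'$ or the retained $T$), so connectivity comes for free and never needs to be re-established. It is essential here that the first statement compares $T$ only against other \emph{spanning trees} with the prescribed degrees, and not against arbitrary multigraphs with those degrees — for an arbitrary such replacement the connectivity of the combined multigraph could fail, and the reduction to Lemma~\ref{lem1} would break down.
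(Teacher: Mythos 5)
Your proof is correct and follows exactly the paper's own argument: the same exchange/swap of $T$ for a cheaper $T'$ (resp.\ $X$ for a cheaper $X'$), with connectivity guaranteed by the spanning tree present in the sum and degrees matching by pointwise additivity, contradicting the optimality of $G$. You merely spell out the degree computation and the appeal to Lemma~\ref{lem1} more explicitly than the paper does; there is no substantive difference.
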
 
\begin{proof}
  Suppose there is a directed spanning tree $T'$ such that $\cost(T') < \cost(T)$, and $\dout_{T'}(i) = \dout_T(i)$, and $\din_{T'}(i) = \din_T(i)$ for all $i \in V$.
  But then $T' + X$ is connected, has the same degree sequence as $G$, while $\cost(T' + X) < \cost(G)$, contradicting the optimality of $G$.

  Similarly, suppose there is a directed multigraph $X'$ such that $\cost(X') < \cost(X)$, and $\dout_{X'}(i) = \dout_X(i)$, and $\din_{X'}(i) = \din_X(i)$ for all $i \in V$.
  But then $T + X'$ is connected, has the same degree sequence as $G$, while $\cost(T + X') < \cost(G)$, contradicting the optimality of $G$.
\end{proof}

Next, we characterize the  feasible  degree sequences of directed spanning trees.\footnote{The feasibility of degree sequences for various graph classes is a well-studied subject, see e.g.\ \cite{Havel1955, Hakimi1962, ErdosGallai1960, Fulkerson1960, KleitmanWang1973, Kim2009, Berger2014}. The simple condition we state is similar to the condition for \emph{undirected} trees, given by Berge~\cite[page 117]{Berge1973}.}

\begin{lemma}
\label{lem:deg_seq_dir}
  Let $V = \{x_1, \dots, x_n\}$ be a set of vertices, where $n \geq 2$.
  There is a directed spanning tree of~\;$V$ with root $x_1$ whose out-degrees and in-degrees are respectively $\dout(\cdot)$ and $\din(\cdot)$, if and only if
  \begin{enumerate}[(i)]
    \item $\din(x_1) = 0$,
    \item $\din(x_i) = 1$ for $i=2,\dots,n$,
    \item $\dout(x_1) > 0$, and
    \item $\sum_i \dout(x_i) = n-1$.
  \end{enumerate}
\end{lemma}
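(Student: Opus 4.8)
The plan is to prove the two directions separately. The forward (necessity) direction follows immediately from the defining structure of a directed spanning tree, so the real content lies in the backward (sufficiency) direction, which I would establish by induction on $n$, peeling off a leaf at each step. Throughout I assume $n \ge 2$; for $n = 1$ the only spanning tree is the isolated root, for which $\dout(x_1) = 0$, so conditions (iii)--(iv) degenerate and the statement should be read as excluding this trivial case.

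For necessity, suppose $T$ is a directed spanning tree of $V$ rooted at $x_1$. Every non-root vertex is reached by a unique directed path from $x_1$ and hence has exactly one incoming edge (from its parent), giving $\din(x_i) = 1$ for $i > 1$ and $\din(x_1) = 0$; this is (i) and (ii). Since a tree on $n$ vertices has exactly $n-1$ edges and each edge contributes one unit of out-degree, $\sum_i \dout(x_i) = n-1$, which is (iv). Finally, because $n \ge 2$ and $T$ is connected, the root must have at least one child in order to reach the remaining vertices, so $\dout(x_1) > 0$, which is (iii).

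For sufficiency I argue by induction on $n$. The base case $n = 2$ forces $\dout(x_1) = 1$ and $\dout(x_2) = 0$, realized by the single edge $(x_1, x_2)$. For the inductive step (so $n \ge 3$), I first note that since $\sum_i \dout(x_i) = n-1 < n$, by pigeonhole some vertex has out-degree $0$; as $\dout(x_1) > 0$ by (iii), this vertex is a non-root $x_j$ with $\din(x_j) = 1$, i.e.\ a genuine leaf. I would then select a parent $x_p \neq x_j$ with $\dout(x_p) \ge 1$ (such $x_p$ exists because $\sum_i \dout(x_i) \ge 2 > 0$, and $x_j$ itself has out-degree $0$), decrement $\dout(x_p)$ by one, delete $x_j$, and recurse on the resulting $(n-1)$-vertex instance. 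The reduced degree sequence again satisfies (i)--(iv), so by the inductive hypothesis it admits an arborescence $T'$ rooted at $x_1$; re-attaching $x_j$ as a child of $x_p$ via the edge $(x_p, x_j)$ then yields a directed spanning tree $T$ rooted at $x_1$, and restoring the decremented out-degree shows that $T$ realizes exactly $\dout(\cdot)$ and $\din(\cdot)$.

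The main obstacle is ensuring that the choice of parent $x_p$ preserves condition (iii) for the smaller instance, i.e.\ that the root retains positive out-degree after a possible decrement. The resolution is to prefer a non-root parent whenever one with positive out-degree exists, leaving $\dout(x_1)$ untouched; the only situation forcing $x_p = x_1$ is when the root is the unique vertex with children, in which case $\dout(x_1) = n-1$ and the decrement leaves $\dout(x_1) = n-2 \ge 1$ for $n \ge 3$. Verifying (i), (ii), and (iv) for the reduced instance is then routine bookkeeping: deleting the leaf together with its single incoming edge keeps the root at in-degree $0$, keeps every surviving non-root at in-degree $1$, and drops the total out-degree by exactly one to $(n-1)-1$, matching what (iv) demands at size $n-1$.
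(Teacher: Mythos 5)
Your proof is correct and follows essentially the same route as the paper's: induction on $n$, peeling off a vertex with out-degree $0$ (a leaf) and attaching it to a parent with remaining out-degree capacity. Your parent-selection rule (prefer a non-root with positive out-degree, and note that a forced choice of the root implies $\dout(x_1)=n-1$, so the decrement is safe) is just an explicit unpacking of the paper's condition $\dout(x_j)\geq 1$ and $\din(x_j)+\dout(x_j)\geq 2$, which serves the identical purpose of preserving condition (iii) for the reduced instance.
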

\begin{proof}
  In the forward direction, in a directed spanning tree all non-root vertices have exactly one parent, proving (i) and (ii).
  The root must have at least one child (iii), and the total number of edges is $n-1$, proving (iv).

  In the backward direction, we argue by induction on $n$.
  In the case $n=2$, we have $\din(x_1) = \dout(x_2) = 0$, and $\din(x_2) = \dout(x_1) = 1$, hence an edge $(x_1,x_2)$ satisfies the degree requirements.
  
  Consider now the case of $n>2$ vertices.
  From $(ii)$--$(iv)$ it follows that for some $k\in\{2,\dots,n\}$, we have $\din(x_k) = 1$ and $\dout(x_k) = 0$, i.e.\ $x_k$ is a leaf.
  
  Let $x_j$ be a vertex for some $j\in\{1,\dots,n\}\setminus\{k\}$ such that $\dout(x_j) \geq 1$, and $\din(x_j) + \dout(x_j) \geq 2$; by $(ii)$--$(iv)$ there must be such a vertex.
  We decrease $\dout(x_j)$ by one.
  Conditions $(i)$--$(iv)$ clearly hold for $V \setminus \{x_k\}$.
  By induction, we can build a tree on $V \setminus \{x_k\}$, and attach $x_k$ to this tree as a leaf, with the edge~$(x_j,x_k)$.
\end{proof}

Let $\DS(n)$ denote the number of different pairs of sequences $(\delta'_1, \dots, \delta'_n), (\delta''_1, \dots, \delta''_n)$, that are feasible for a directed spanning tree, i.e.\ for vertex set $V = \{x_1, \dots, x_n\}$, for some directed spanning tree $T$ with root $x_1$, we have $\dout_T(x_i) = \delta'_i$, and $\din_T(x_i) = \delta''_i$, for all $i \in \{1,\hdots,n\}$. 
By Lemma~\ref{lem:deg_seq_dir}, $\DS(n)$ equals the number of ways to distribute $n-1$ out-degrees to $n$ vertices, such that a designated vertex has non-zero out-degree.
This task is the same as distributing $n-2$ balls arbitrarily into $n$ bins, of which there are ${2n-3 \choose n-1} = O(4^n)$ ways.

\subsection{{{{\sc enum-MV}}}: polynomial space and superexponential time}
\label{secenum}
Given the vertex set $V = \{1,\dots,n\}$, multiplicities $k_i \in\mathbb N$ and cost function $d$, we wish to find a tour of minimum cost that visits each $i \in V$ exactly $k_i$ times. 
From Lemma~\ref{lem3}, our first algorithm presents itself.
It simply iterates over all directed spanning trees $T$ with vertex set $V$, and extends each of them optimally to a valid tour $C_T$. 
Among all valid tours constructed, one with smallest cost is returned (Algorithm~\ref{alg:enumMV1}). 

This simple algorithm already improves on the previous best run time (although it is still superexponential), and reduces the space requirement from superexponential to polynomial.

\begin{algorithm}
  \caption{{\sc enum-MV} for solving MV-TSP using enumeration}\label{alg:enumMV1}
  \begin{algorithmic}[1]
    \Statex \textbf{Input:} Vertex set $V$, cost function $d$, multiplicities $k_i$.
    \Statex \textbf{Output:} A tour of minimum cost that visits each $i \in V$ exactly $k_i$ times. 
    \For {each directed spanning tree $T$ with root $r \in V$}        
      \State Find minimum cost directed multigraph $X$ such that for all $i \in V$:
	\Statex \quad \quad $\dout_{X}(i) = k_i - \dout_T(i)$,
	\Statex \quad \quad $\din_{X}(i) = k_i - \din_T(i)$.
	\Statex \ \quad Let $C_T \leftarrow T+X$.
        \EndFor
        \State \textbf{return} $C_T$ with smallest cost.
    \end{algorithmic}
\end{algorithm}

The correctness of the algorithm is immediate: from Lemma~\ref{lem3} it follows that all $C_T$'s considered are valid (connected, and with degrees matching the required multiplicities), and by Lemma~\ref{lem2}, the optimal tour $C$ must be considered during the execution. 

The iteration of Line 1 requires us to enumerate all labeled trees with vertex set $V$.
There are~$n^{n-2}$ such trees~\cite{Cayley1889}, and standard techniques can be used to enumerate them with a constant overhead per item (see e.g.\ Kapoor and Ramesh~\cite{KapoorRamesh1995}).
For each considered tree we orient the edges in a unique way, away from $r$. 

Let $T$ be the current tree.
In Line 2 we find a minimum cost directed multigraph~$X$, with given out-degree and in-degree sequence, such as to extend $T$ into a valid tour.
If, for some vertex $i$, it holds that $\din_T(i) > k_i$ or $\dout_T(i) > k_i$, we proceed to the next spanning tree, since the current tree cannot be extended to a valid tour. Observe that this may happen only if $k_i < n-1$, for some $i \in V$. 
Otherwise, we find the optimal $X$ by solving a transportation problem in polynomial time.
Throughout the execution we remember the best tour found so far, which we output in the end.
We describe next the transportation subroutine, which is common to all our algorithms, and is essentially the same as in the Cosmadakis-Papadimitriou algorithm.

\paragraph*{The transportation problem.}
The subproblem we need to solve is finding a minimum cost directed multigraph $X$ over vertex set $V$, with given out-degree and in-degree requirements.
We can map this problem to an instance of the Hitchcock transportation problem~\cite{Hitchcock1941}, where the goal is to transport a given amount of goods from $N$ warehouses to $M$ outlets with given pairwise shipping costs.
Note that this is a special case of the more general min-cost max-flow problem~\cite{EdmondsKarp1970}.

More precisely, let us define a digraph with vertices $\{s,t\} \cup \{s_i,t_i ~|~ i \in V \}$.
Edges are\linebreak $\{(s,s_i),(t_i,t) ~|~ i \in V\}$, and $\{(s_i,t_j) ~|~ i,j \in V\}$.
We set cost $0$ to edges $(s,s_i)$ and $(t_i,t)$ and cost~$d_{ij}$ (i.e.\ the costs given in the MV-TSP instance) to $(s_i,t_j)$. We set capacity $\infty$ to edges~$(s_i,t_j)$ and capacities $k_i - \dout_T(i)$ to $(s,s_i)$, and $k_i - \din_T(i)$ to $(t_i,t)$.
The construction is identical to the one used by Cosmadakis and Papadimitriou, apart from the fact that in our case the capacity of $(s,s_i)$ may be different from the capacity of $(t_i,t)$.
Observe that the sum of capacities of $(s,s_i)$-edges equals the sum of capacities of $(t_i,t)$-edges over all $i \in V$.
Thus, a maximal $s-t$ flow saturates all these edges. 
The amount of flow transmitted on the edge $(s_i,t_j)$ gives the multiplicity of edge $(i,j)$ in the sought after multigraph, for all $i,j \in V$.
A minimum cost maximum flow clearly maps to a minimum cost edge set with the given degree constraints.

In the Cosmadakis-Papadimitriou algorithm, the transportation subproblems are solved via the scaling method of Edmonds and Karp~\cite{EdmondsKarp1970}. This algorithm proceeds by solving $O(\log{k})$ approximate versions of the problem, where the costs are the same as in the original problem, but the capacities are scaled (i.e.\ divided and rounded down) by a factor $2^p$ for $p = \ceil{\log k}, \dots, 0$.
Each approximate problem is solved in $O(n^3)$ time, by performing flow augmentations on the optimal flow found in the previous approximation, multiplied by two. The overall run time for solving the described transportation problem is therefore $O(n^3 \log{k})$.

Cosmadakis and Papadimitriou describe an improvement which also applies for our case.
Namely, they show that the total run time for solving several instances with the same costs can be reduced, if the capacities on corresponding edges in two different instances may differ by at most $n$. 
The strategy is to solve all but the last $\log{n}$ approximate problems only once, as these are (essentially) the same for all instances.
For different instances we only need to solve the last $\log{n}$ approximate problems, i.e.\ at the finest levels of approximation.
This gives a run time of $O(n^3 \log{k})$ for solving the ``master problem'', and $O(n^3 \log{n})$ for solving each individual instance.
We refer to Cosmadakis and Papadimitriou~\cite{CosmadakisPapadimitriou1984} as well as to Edmonds and Karp~\cite{EdmondsKarp1970} for details.

In our case, the different instances of the transportation problem are for finding the directed multigraphs $X$ for different trees $T$.
Each of these instances agree in the underlying graph and cost function, and may differ only in the capacities.
As the maximum degree of each tree $T$ is at most~$n-1$, the differences are bounded, as required.

As an alternative to the Edmonds-Karp algorithm, we may solve the arising transportation problems with a \emph{strongly polynomial} algorithm, e.g.\ the one by Orlin~\cite{Orlin} or its extension due to Kleinschmidt and Schannath~\cite{Kleinschmidt}. Note that these algorithms were not yet available when Cosmadakis and Papadimitriou obtained their result. The run time for the transportation subproblem then becomes $O(n^3 \log{n})$, i.e.\ independent of $k$.   
Such an improvement is likely of theoretical interest only; furthermore, it assumes that operations on the multiplicities $k_i$ take constant time. If this assumption is unrealistic, e.g.\ if $k$ is exponential in $n$, we may fall back to the Edmonds-Karp algorithm, with the term $O(n^3 \log{k})$ added to the run time. 
 
\paragraph*{Analysis of Algorithm~\ref{alg:enumMV1}.}
We iterate over all $O(n^{n-2})$ directed spanning trees and solve a transportation problem for each, with run time $O(n^3 \log{n})$.
The total run time $O(n^{n+1} \log{n})$ follows. 
The space requirement of the algorithm is dominated by that of solving a (single) transportation problem, and of storing the edge set of a single tour (apart from minor bookkeeping).

\subsubsection{Improved enumeration algorithm}
We can slightly improve the run time of {\sc enum-MV} by observing that the solution of the transportation problem depends only on the degree sequence of the current tree $T$, and not the actual edges of $T$.
Therefore, different trees with the same degree sequence can be extended in the same way.
Observe that several trees may have the same degree sequence; 
in an extreme case, all $(n-2)!$ simple Hamiltonian paths with the same endpoints have the same degree sequence.

\begin{algorithm}
  \caption{{\sc enum-MV} for solving MV-TSP using enumeration (improved)}\label{alg:enumMV2}
  \begin{algorithmic}[1]
    \Statex \textbf{Input:} Vertex set $V$, cost function $d$, multiplicities $k_i$.
    \Statex \textbf{Output:} A tour of minimum cost that visits each $i \in V$ exactly $k_i$ times. 
    \For {each feasible degree sequence $\dout(\cdot), \din(\cdot)$ of a directed spanning tree with root $r$}        

        \State Find minimum cost directed multigraph $X$ such that for all $i \in V$:
        \Statex \ \ \quad \quad \quad $\dout_{X}(i) = k_i - \dout(i)$,
	    \Statex \ \ \quad \quad \quad $\din_{X}(i) = k_i - \din(i)$.
        \For {each directed spanning tree $T$ with $\dout_T = \dout$ and $\din_T = \din$} 
	    \Statex \ \ \ \quad \quad Let $C_T \leftarrow T+X$.
      \EndFor
   \EndFor
   \State \textbf{return} $C_T$ with smallest cost.
  \end{algorithmic}
\end{algorithm}

Algorithm~\ref{alg:enumMV2} implements this idea, iterating over all trees, grouped by their degree sequences.
It solves the transportation problem only once for each degree sequence (there are $\DS(n)$ of them). 

The correctness is immediate, as all directed spanning trees $T$ are still considered as before.
Assume that we can iterate over all feasible degree sequences, and all corresponding trees with $O(n)$ overhead per item. By Lemma~\ref{lem:deg_seq_dir}, the first task only requires us to consider all ways of distributing $n-2$ out-degrees among $n$ vertices. For completeness, we describe a procedure for this task in Appendix~\ref{combalg}. We give the subroutine for the second task in \S\,\ref{secsub2}.
We thus obtain the run time $O(n^{n-1}  + \DS(n) \cdot n^4 \log{n}) = O(n^{n-1})$.
The space requirement is asymptotically unchanged.

\subsubsection{Generating trees by degree sequence}
\label{secsub2}
In the proof of Lemma~\ref{lem:deg_seq_dir}, we generate \emph{one} directed tree from its degree sequence.
In this subsection we show how to generate \emph{all} trees for a given degree sequence (Algorithm~\ref{alg:degree_sequence_tree}). 

The initial call to the \textsc{buildTree}() procedure is with a feasible input degree sequence $(\dout, \din)$ and an empty ``stub'' ($\mathbf{x}^\text{stub} = \{\}$) as arguments.
The algorithm finds the first \emph{unattached} vertex that is either a leaf of the final tree or whose subtree is already complete (that is, $\dout(x_i) = 0$), then it finds all possibilities for attaching $x_i$ to the rest of the tree. Observe that the fact that $x_i$ has no more capacity for outgoing edges prevents cycles. 
The procedure is then called recursively with modified arguments: edge $(x_j,x_i)$ is added to the stub and the out-degree of $x_j$ and the in-degree of~$x_i$ are decremented. 

\begin{algorithm}
  \caption{{\sc buildTree} for generating all directed trees for a given degree sequence\label{alg:degree_sequence_tree}}
  \begin{algorithmic}[1]
    \Statex \textbf{Input:} Degree sequence $(\dout, \din)$ and forest $\mathbf{x}^\text{stub}$.
    \Statex \textbf{Output:} A \emph{generator} of all trees with degree sequence $(\dout, \din)$, attached to $\mathbf{x}^\text{stub}$.
    \Procedure{{\sc buildTree}}{$\dout$, $\din$, $\mathbf{x}^\text{stub}$}
    \If{$\sum \din(\cdot) = 1$ }
      \State Let $i$ be the index where $\din(x_i)=1$.
      \State \textbf{yield} $x^\text{stub} \cup \{(x_1,x_i)\}$
    \Else 
      \State Let $i$ be the first index where $\dout(x_i) = 0$ and $\din(x_i) = 1$. 
      \For{every index $j \neq i$, such that $\dout(x_j) + \din(x_j)\geq 2$} \hfill$\triangleright$ attach $x_i$ to parent $x_j$
        \State $\Bigl( {\dout}', {\din}' \Bigr)  \leftarrow  \Bigl( \dout, \din \Bigr) $
        \State ${\dout}' (x_j) \leftarrow {\dout}' (x_j) - 1$ 
        \State  ${\din}' (x_i) \leftarrow 0$ 
        \State \textsc{buildTree}(${\dout}' $, ${\din}' $, $\mathbf{x}^\text{stub} \cup \{(x_j,x_i)\}$)
      \EndFor
    \EndIf
    \EndProcedure
  \end{algorithmic}
\end{algorithm}

At each recursive level there are as many new calls as possible candidates for the next edge, with both degree demands decreased by one, and with the stub gaining one additional edge. During the intermediate calls the stub may be disconnected, i.e.\ it represents a forest.
At the $(n-1)$-th level exactly two degree-one vertices remain, say, $x_i$ with in-degree $1$ and the root~$x_1$ with out-degree $1$.
Adding the edge $(x_1,x_i)$ finishes the construction. 

Observe that there are no ``dead ends'' during this process, i.e.\ every call of \textsc{buildTree} eventually results in a valid directed tree in the last level of the recursion, and there are no discarded graphs during the process. The argument mirrors the one in the proof of Lemma~\ref{lem:deg_seq_dir}. In every call to \textsc{buildTree}, the degree sequence $(\delta^\text{out}, \delta^\text{in})$ encodes a valid directed tree on the subset of vertices with non-zero in-degree, together with the root $x_1$. Let $h$ denote the number of vertices with non-zero in-degree. We argue by induction on $h$ that the construction succeeds. For $h=1$ (Lines 2--4) this is clear. For $h\geq 2$ (Lines 6--11), we attach $x_i$ to a vertex $x_j$ of a valid tree, constructed by the recursive call with $h-1$ vertices with non-zero in-degree. 

For every directed tree with the required degree-sequence, repeatedly peeling off the lowest-index leaf gives a valid edge-insertion order that is found by the algorithm. (The order in which vertices are considered is essentially the \emph{Pr\"ufer sequence}~\cite{moon1970counting} of the tree.) Finally, as the recursive calls correspond to attaching $x_i$ to \emph{different} parents, no tree is constructed more than once. 

Note that Algorithm~\ref{alg:degree_sequence_tree} does not return the list of trees realising a degree sequence $(\dout, \din)$ at once, it is instead a 
\emph{generator} that serves as the head of the loop in Step~3 of Algorithm~\ref{alg:enumMV2}, outputting a different tree in every iteration. 
We used the keyword \textbf{yield} instead of \textbf{return} to indicate that the algorithm does not terminate after ``returning'' the first result.
   
\subsection{{\sc dp-MV}: exponential space and single-exponential time}
\label{secdp}
Next, we improve the run time to single-exponential, by making use of Lemma~\ref{lem2}.
Specifically, we observe that for every feasible degree sequence only the tree with minimum cost needs to be considered. 
The enumeration in Algorithm~\ref{alg:degree_sequence_tree} can easily be modified to return, instead of all trees, just one with smallest cost, this, however, would not improve the asymptotic run time.
Instead, in this section we describe a dynamic programming algorithm resembling the algorithms by Bellman~\cite{Bellman1962}, and Held and Karp~\cite{HeldKarp1962}, for directly computing the best directed tree for a given degree sequence. 

The outline of the algorithm is shown in Algorithm~\ref{alg:outline}, and it is identical for {\sc dp-MV} and {\sc dc-MV}, described in \S\,\ref{secdc} and \S\,\ref{secdc2}.
The algorithms {\sc dp-MV} and {\sc dc-MV} differ in the way they find the minimum cost directed tree, i.e.\ Line~2 of the generic Algorithm~\ref{alg:outline}.

\begin{algorithm}
  \caption{Solving MV-TSP by optimizing over trees (common for {\sc dp-MV}, {\sc dc-MV} and {\sc dc-MV2}).\label{alg:outline}}
  \begin{algorithmic}[1]
    \Statex \textbf{Input:} Vertex set $V$, cost function $d$, multiplicities $k_i$.
    \Statex \textbf{Output:} A tour of minimum cost that visits each $i \in V$ exactly $k_i$ times. 
    \For{each feasible degree sequence $\dout(\cdot), \din(\cdot)$ of a directed spanning tree}        
      \State Find a minimum-cost directed tree $T$ with $\din_T = \din$ and $\dout_T = \dout$.
      \State Find a minimum-cost directed multigraph $X$ such that for all $i \in V$:
      \Statex \ \ \quad \quad \quad $\din_{X}(i) = k_i - \din_T(i)$,
      \Statex \ \ \quad \quad \quad $\dout_{X}(i) = k_i - \dout_T(i)$.
      \Statex \ \ \ \quad \quad Let $C_T \leftarrow T+X$.
    \EndFor
    \State \textbf{return} $C_T$ with smallest cost.
  \end{algorithmic}
\end{algorithm}

The dynamic programming approach ({\sc dp-MV}) resembles Algorithm~\ref{alg:degree_sequence_tree} for iterating over all directed trees with a given degree sequence.
Specifically, let $(\dout, \din)$ be the degree sequence for which we wish to find a minimum-cost tree.
We build a dynamic programming table $\T$ that holds an optimal tree and its cost for \emph{every} feasible degree sequence.
The solution can thus be read from $\T[\dout,\din]$.
Observe that specifying a degree sequence allows us to restrict the problem to arbitrary subsets of~$V$, by simply setting the degrees of non-participating vertices to zero.

To compute $\T[\dout,\din]$, we find the leaf with smallest index $x_i$ and all non-leaves $x_j$ that may be connected to $x_i$ by an edge in the optimal tree (similarly to Algorithm~\ref{alg:degree_sequence_tree}).
For each choice of~$x_j$ we compute the optimal tree by adding the connecting edge $(x_j,x_i)$ to the optimal tree over $V \setminus \{x_i\}$, with the degree sequence suitably updated. 

The correctness of the dynamic programming algorithm follows from an observation similar to Lemma~\ref{lem2}; every subtree of the optimal tree must be optimal for its degree sequence, as otherwise it could be swapped for a cheaper subtree.
 The details are shown in Algorithm~\ref{alg:dpmv}.
\begin{algorithm}
  \caption{{\sc dp-MV} for generating the optimal directed tree for a given degree sequence.\label{alg:dpmv}}
  \begin{algorithmic}[1]
    \Statex \textbf{Input:} Degree sequence $(\delta^\text{out}, \delta^\text{in})$. 
    \Statex \textbf{Output:} $(T,c)$, where $T$ is optimum tree with root $x_1$, degrees $\delta^\text{out}, \delta^\text{in}$ and cost $c$.
    \Procedure{$\T\left[{{\dout}}, \din\right]$}{}
    \If{$\sum \din(\cdot) = 1$ }
      \State Let $i$ be the index where $\delta^\text{in}(x_i)=1$.
      \State \textbf{return} ($\{(x_1,x_i)\}$, $d_{x_1,x_i}$)
    \Else \, {\color{gray} $(\sum \din(\cdot) > 1)$}
      \State Let $i$ be the first index where $\dout(x_i)=0$ and $\din(x_i)=1$.
      \For{every index $j \neq i$, such that $\delta^\text{out}(x_j) + \delta^\text{in}(x_j) \geq 2$} \hfill$\triangleright$ attach $x_i$ to parent $x_j$
        \State $\Bigl( {\delta^\text{out}}', {\delta^\text{in}}' \Bigr)  \leftarrow \Bigl( \delta^\text{out}, \delta^\text{in} \Bigr) $
        \State ${\delta^\text{out}}' (x_j) \leftarrow {\delta^\text{out}}' (x_j) - 1$ 
        \State  ${\delta^\text{in}}' (x_i) \leftarrow 0$ 
        \State $(T_j,c_j) \leftarrow  \T[{\delta^\text{out}}' ,{\delta^\text{in}}' ]$
        \State ${T_j} \leftarrow T_j \cup \{(x_j,x_i)\}$
        \State ${c_j} \leftarrow c_j + d_{x_j,x_i}$
      \EndFor
      \State \textbf{return} $(T_j, c_j)$ with minimum $c_j$

    \EndIf
    \EndProcedure
  \end{algorithmic}
\end{algorithm}

\paragraph*{Analysis of Algorithm~\ref{alg:dpmv}.} Observe that the number of possible entries $\T[\cdot,\cdot]$, i.e.\ the number of feasible degree sequences for trees on subsets of $V$, is at most $\sum_{k=1}^{n} {n \choose k} {\DS(k)}$.
Using our previous estimate $\DS(n) = O(4^n)$, this sum evaluates, by the binomial theorem, to $O(5^n)$.
The overall run time of Algorithm~\ref{alg:outline}, including the transportation subproblem, with Algorithm~\ref{alg:dpmv} as a subroutine is $O^{\ast}(5^n)$, since the entry for a given degree sequence is computed at most once over all iterations, and the values are stored through the entire execution.
In practice, storing an entire tree in each $\T[\cdot,\cdot]$ is wasteful; for the optimum tree to be constructible from the table, it is sufficient to store the node to which the lowest-index leaf is connected. The claimed time and space complexities follow.

\subsection{{\sc dc-MV}: polynomial space and single-exponential time}
\label{secdc}
In this section we show how to reduce the space complexity to polynomial, while maintaining a single-exponential run time.

The outer loop (Algorithm~\ref{alg:outline}) remains the same, but we replace the subroutine for finding an optimal directed spanning tree (Algorithm~\ref{alg:dpmv}) with an approach based on divide and conquer (Algorithm~\ref{alg:dcmv}).
The approach is inspired by the algorithm of Gurevich and Shelah~\cite{GurevichShelah1987} for finding an optimal TSP tour.

The algorithm relies on the following observation about tree-separators.
Let $(V_1, V_2)$ be a partition of $V$, i.e.\ $V_1 \cup V_2 = V$, and $V_1 \cap V_2 = \emptyset$, and let $|V|= n$.
We say that $(V_1, V_2)$ is \emph{balanced} if $\max{ \bigl\{ |V_1|, |V_2| \bigr\} } \leq \ceil{2n/3}$. The following result is folklore (see e.g.\ ~\cite[\S\,9.1]{nishizeki1988planar}), we give a short proof for completeness.
\begin{lemma}
\label{lemsep}
  Every tree $T$ with vertex set $V$ admits a balanced partition $(V_1,V_2)$ of $V$ such that all edges of $T$ between $V_1$ and $V_2$ are incident to a vertex $v \in V_1$.
\end{lemma}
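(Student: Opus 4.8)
The plan is to realize the desired partition through a single separator vertex. First I would observe that the condition ``all edges of $T$ between $V_1$ and $V_2$ are incident to a vertex $v \in V_1$'' is equivalent to requiring that $V_2$ be a union of connected components of $T - v$, with $V_1 = \{v\} \cup (\text{remaining components})$. Indeed, if every crossing edge has $v$ as an endpoint, then after deleting $v$ no edge joins $V_2$ to $V_1 \setminus \{v\}$, so $V_2$ must be a union of components of $T - v$; conversely, any such union produces crossing edges only at $v$. Hence it suffices to find a vertex $v$ whose removal leaves components that can be split into two groups — one forming $V_2$, the others together with $v$ forming $V_1$ — so that both sides have at most $\lceil 2n/3 \rceil$ vertices.

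I would take $v$ to be a \emph{centroid} of $T$, i.e.\ a vertex minimizing the size of the largest component of $T - v$. A standard exchange argument shows that for this $v$ every component of $T - v$ has at most $\lfloor n/2 \rfloor$ vertices: if some component $C$, attached to $v$ by an edge $\{v,w\}$ with $w \in C$, had $|C| > n/2$, then passing to $w$ would strictly decrease the largest-component size, since the component of $T - w$ containing $v$ has size $n - |C| < n/2 < |C|$ while every other component of $T - w$ is a proper part of $C$; this would contradict the choice of $v$.

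Let $a_1 \geq \dots \geq a_d$ be the sizes of the components of $T - v$, so that $\sum_i a_i = n-1$ and $a_1 \leq \lfloor n/2 \rfloor$. The useful identity here is $n - \lfloor n/3 \rfloor = \lceil 2n/3 \rceil$ (checked over the three residues of $n$ modulo $3$), which turns the balance requirement into the single condition $\lfloor n/3 \rfloor \leq |V_2| \leq \lceil 2n/3 \rceil$. I then split into two cases. If $a_1 \geq \lfloor n/3 \rfloor$, I let $V_2$ be the single largest component, so that $\lfloor n/3 \rfloor \leq |V_2| = a_1 \leq \lfloor n/2 \rfloor \leq \lceil 2n/3 \rceil$. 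Otherwise every component has fewer than $\lfloor n/3 \rfloor$ vertices, and I add components to $V_2$ greedily until the running total first reaches $\lfloor n/3 \rfloor$; at that instant the previous total was below $\lfloor n/3 \rfloor$ and the component just added has size below $\lfloor n/3 \rfloor$, whence $|V_2| < 2 \lfloor n/3 \rfloor \leq \lceil 2n/3 \rceil$. In both cases $\lfloor n/3 \rfloor \leq |V_2| \leq \lceil 2n/3 \rceil$, and by the identity above this makes $(V_1, V_2)$ balanced.

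I expect the main obstacle to be the balancing step, not the existence of a separator: the centroid only guarantees components of size up to $\lfloor n/2 \rfloor$, which on its own does not yield a $2/3$-balanced bipartition. The real content is the packing argument showing that the components can always be grouped into a set whose total size lands in the window $[\lfloor n/3 \rfloor, \lceil 2n/3 \rceil]$ — and in particular that the greedy accumulation never overshoots, which hinges precisely on each component being smaller than $\lfloor n/3 \rfloor$ in the relevant case. The degenerate cases $n \leq 2$, as well as the empty greedy sum when $\lfloor n/3 \rfloor = 0$, are handled directly and present no difficulty.
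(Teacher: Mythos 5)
Your proof is correct and takes essentially the same route as the paper's: pick a centroid (every component of $T-v$ has size at most $\lfloor n/2 \rfloor$), take the largest component as $V_2$ if its size is at least $\lfloor n/3 \rfloor$, and otherwise accumulate the small components into $V_2$ until the total lands in $[\lfloor n/3 \rfloor, \lceil 2n/3 \rceil]$. Your explicit no-overshoot bound $|V_2| < 2\lfloor n/3 \rfloor$ and the identity $n - \lfloor n/3 \rfloor = \lceil 2n/3 \rceil$ merely spell out details the paper leaves implicit.
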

\begin{proof}
  A very old result of Jordan~\cite{Jordan1869} states that every tree has a \emph{centroid}, which is a vertex whose removal splits the tree into subtrees not larger than half of the original tree.
  To find such a centroid, move from an arbitrary vertex, one edge at a time, towards the largest subtree.

  Let $T_1, \dots, T_m$ be the vertex sets of the trees, in decreasing order of size, resulting from deleting the centroid of $T$. 
  Let $e_1, \dots, e_m$ denote the edges that connect the respective trees to the centroid in $T$.
  If $|T_1| \geq \floor{n/3}$, then we have the balanced partition $(V \setminus T_1, T_1)$, with a single crossing edge.

  Otherwise, let $m'$ be the smallest index such that $\floor{n/3} \leq |T_1| + \dots + |T_{m'}| \leq \ceil{2n/3}$.
  Observe that as $|T_i| < \floor{n/3}$ for all $i$, such an index $m'$ must exist.
  Now the balanced partition is\linebreak $(V \setminus (T_1 \cup \dots \cup T_{m'}), T_1 \cup \dots \cup T_{m'})$.
  The partition fulfils the stated condition as all crossing edges are incident to the centroid, which is on the left side.
\end{proof}

At a high-level, {\sc dc-MV} works as follows.
It \emph{guesses} the partition $(V_1, V_2)$ of the vertex set~$V$ according to a balanced separator of the (unknown) optimal rooted tree $T$, satisfying the conditions of Lemma~\ref{lemsep}.
It also \emph{guesses} the distinguished vertex $v \in V_1$ to which all edges that cross the partition are incident. (Throughout the description of the algorithm, guessing should be understood as iterating through all possible choices.)

The balanced separator splits $T$ into a tree with vertex set $V_1$ and a forest with vertex set~$V_2$. There are two cases to consider, depending on whether the root $r=x_1$ of $T$ falls in $V_1$ or $V_2$. 
In the first case, $V_1$ induces a directed subtree of $T$ rooted at $r$, in the second case, $V_1$ induces a directed subtree of $T$ rooted at $v$.
 
Observe that the out-degrees and in-degrees of vertices in $V_1$ are feasible for a directed tree, except for vertex $v$ which has additional degrees due to the edges crossing the partition. We refer to these out-degrees and in-degrees as the \emph{excess} of $v$.
The excess of $v$ can be computed using Lemma~\ref{lem:deg_seq_dir}.
This excess determines the number and orientation of edges across the cut, even if the endpoints, other than $v$, of the edges are unknown.  

By the same argument as in Lemma~\ref{lem2}, the subtree of $T$ induced by $V_1$ is optimal for its corresponding degree sequence after subtracting the excess of $v$, therefore we can find it by a recursive call to the procedure.

On the other side of the partition we have a \emph{collection} of trees.
To obtain an instance of our original problem, we add a virtual vertex $w$ to $V_2$ to play the role of $v$.
For all vertices $u$ we set the distance between $u$ and the virtual vertex $w$ to be the same as the distance between $u$ and $v$.
We set the out-degree and in-degree of $w$ to the \emph{excess} of $v$, to allow it to connect to all vertices of $V_2$ that~$v$ connects to in $T$. Now we can find the optimal tree on this side too, by a recursive call to the procedure.

\begin{algorithm}
  \caption{{\sc dc-MV} for generating the optimal directed tree for a given degree sequence.\label{alg:dcmv}}
  \begin{algorithmic}[1]
    \Statex \textbf{Input:} Vertex set $V$, degree sequence $\left(\dout, \din\right)$, and $n = |V|$.
    \Statex \textbf{Output:} $(T,c)$, where $T$ is optimum tree for $\delta^\text{out}, \delta^\text{in}$ and $c$ is its cost. 
    \Procedure{$\T\left[ V, \dout, \din\right]$}{}
    \If {$|V| \leq 5$}
	  \State Directly find optimum tree $T$ with cost $c$.
	  \State \textbf{return} ($T$, $c$)
    \Else
      \For {each partition $(V_1, V_2)$ of $V$, such that $|V_1|, |V_2| \leq \ceil{2n/3}$}        
        \For {each $v \in V_1$}
          \State $\Bigl( \mathsf{excess}^{\text{out}}_{v}, \mathsf{excess}^{\text{in}}_{v} \Bigr) \leftarrow \Bigl( \displaystyle\sum_{{u \in V_1}} \dout(u) - |V_1| + 1, ~~\displaystyle\sum_{{u \in V_1}} \din(u) - |V_1| + 1 \Bigr)$
          \State $\Bigl( {\delta^\text{out}}', {\delta^\text{in}}' \Bigr) \leftarrow \Bigl( \delta^\text{out}, \delta^\text{in} \Bigr)$
          \State ${\delta^\text{out}}' (v) \leftarrow {\delta^\text{out}}' (v) - \mathsf{excess}^{\text{out}}_{v}$, \quad ${\delta^\text{in}}' (v) \leftarrow {\delta^\text{in}}' (v) - \mathsf{excess}^{\text{in}}_{v}$
          \State $V'_2 \leftarrow V_2 \cup \{w\}$
		   \State $d_{wu} \leftarrow d_{vu}$, ~~$d_{uw} \leftarrow d_{uv}$,~~ for all $u \in V_2$

          \State $\Bigl( {\delta^\text{out}}' (w), {\delta^\text{in}}' (w) \Bigr) \leftarrow \Bigl( \mathsf{excess}^{\text{out}}_{v}, \mathsf{excess}^{\text{in}}_{v} \Bigr)$
          \If {${\delta^\text{out}}' (\cdot), {\delta^\text{in}}' (\cdot) \geq 0$ and ${\delta^\text{out}}' (v) + {\delta^\text{in}}' (v) \geq 1$ and ${\delta^\text{out}}' (w) + {\delta^\text{in}}' (w) \geq 1$ } 
            \State {$c_1 \leftarrow \T\left[V_1, ~~ {\delta^\text{out}}'  \rvert_{V_1}, ~~{\delta^\text{in}}'  \rvert_{V_1} \right]$} 
            \State {$c_2 \leftarrow \T\left[V'_2, ~~ {\delta^\text{out}}'  \rvert_{V'_2}, ~~{\delta^\text{in}}'  \rvert_{V'_2} \right]$} 
		    \State $c \leftarrow c_1 + c_2$
		    \State $T \leftarrow$ merge $T_1$ and $T_2$ by identifying $v$ and $w$
          \EndIf
        \EndFor
      \EndFor
    \State \textbf{return} $(T,c)$ with minimum $c$
    \EndIf
    \EndProcedure
  \end{algorithmic}
\end{algorithm}

On both sides of the partition, the roots of the directed trees are uniquely determined by the remaining degrees.
Observe that if the original root coincides with the centroid $v$, then $v$ and $w$ will be the roots of the trees in the recursive calls. After obtaining the optimal trees on the two sides (assuming the guesses were correct), we reconstruct $T$ by gluing the two trees together, identifying the vertices $v$ and $w$.
As this operation adds all degrees, we obtain a valid tree for the original degree sequence, furthermore, the tree must be optimal. 
We illustrate the two cases of this process in Fig.~\ref{fig:centroid1} with pseudocode given in Algorithm~\ref{alg:dcmv}. 

\begin{figure}
  \centering
  \includegraphics[width=12cm]{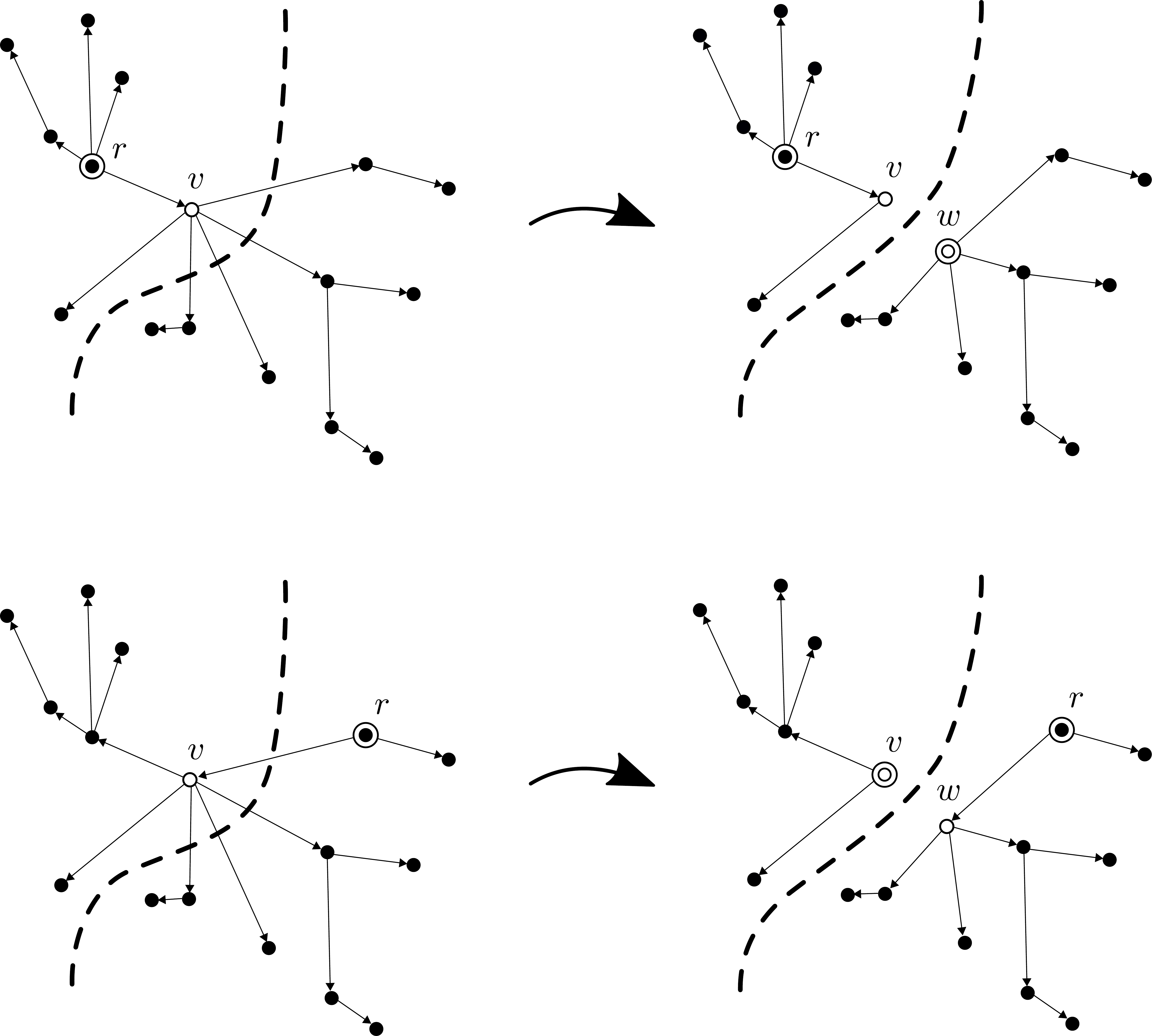}
  \caption{Illustration of {\sc dc-MV}. (left) optimal tree and balanced centroid-partitioning, the centroid shown as a hollow circle; (right) optimal trees on the two sides of the partitioning, centroid $v$ and its virtual pair $w$ shown as hollow circles; (above) root is in the left side of the partition, shown as double circle; (below) root is in the right side of the partition, shown as double circle.\label{fig:centroid1}}
\end{figure}

The threshold value $5$ in Line~2 of Algorithm~\ref{alg:dcmv} is to ensure progress; the subproblem size $\ceil{2n/3} + 1$ (the second term accounts for the virtual vertex $w$) needs to be less than $n$, which holds for $n>5$.
For $n \leq 5$, we proceed by a brute-force approach, constructing all feasible trees and choosing one with minimal cost.
In Line~8, the excess out-degree and in-degree of $v$ are calculated.
Both types of degrees need to sum to $|V_1|-1$, from which the expression follows. 
The condition in Line~14 ensures that we only solve feasible problems.
The notation $\delta \rvert_X$ indicates a restriction of a degree sequence to a subset $X$ of the vertices.

\paragraph*{Analysis of Algorithm~\ref{alg:dcmv}.} 
For a set $V$ of size $n$ we consider at most $2^n$ partitions (Line~6) and at most $\ceil{2n/3}$ choices of $v$ (Line~7), and we recur on subsets of size at most $\ceil{2n/3} + 1$.
All remaining operations take $O(n)$ time.
The run time $t(n)$ of {\sc dc-MV} thus satisfies: 
\begin{eqnarray*}
  t(n) & \leq & 2^n  \cdot n \cdot 2 t(\ceil{2n/3} + 1) \\
          &   =  & n^{O(\log n)} \cdot 2^{n \left( \sum_k{(2/3)^k }\right) + O(\log{n})}  \\
          &   =  & n^{O(\log n)} \cdot O(2^{3n}) \enspace .
\end{eqnarray*}
Including the transportation problem, the overall run time of Algorithm~\ref{alg:outline} using {\sc dc-MV} is thus $\DS(n) \cdot (8^n n^{O(\log n)} + n^3 \log n) + O(n^3 \log k) = O^*(32^{n+o(n)} + \log k)$.
For the space complexity, observe that we do not precompute $\T[\cdot,\cdot,\cdot]$ and, apart from minor bookkeeping, only a single tour is stored at each time, spread over $O(\log{n})$ recursive levels. The claimed bounds follow.

\subsection{{\sc dc-MV2}: polynomial space and improved single-exponential time}
\label{secdc2}

Finally, we modify the divide and conquer algorithm, improving its run time for a given degree sequence, from $8^{n+o(n)}$ to $4^{n+o(n)}$, while maintaining polynomial space.

The result is based on a strengthening of the structural observation about tree-separators (Lemma~\ref{lemsep}).
Specifically, we observe that the vertex set of a tree can be partitioned in a ``perfectly balanced'' way, if we allow the edges across the partition to be incident to a \emph{logarithmic} number of vertices on one side (instead of a single vertex as in Lemma~\ref{lemsep}). 

Again, let $(V_1, V_2)$ be a partition of $V(T)$, and let $|V(T)|= n$.
We say that $(V_1, V_2)$ is \emph{perfectly balanced} if $\max \bigl\{ |V_1|, |V_2| \bigr\} \leq \ceil{n/2}$. 
\begin{lemma}
\label{lemsep2}
  Every tree $T$ admits a perfectly balanced partition $(V_1,V_2)$ such that all edges of $T$ between~$V_1$ and $V_2$ are incident to a vertex in $\{v_1, \dots, v_k\} \subseteq V_1$, where $k \leq \floor{\log_2 n}$.
\end{lemma}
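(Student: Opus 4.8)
The plan is to obtain the perfectly balanced partition by iterating the centroid argument of Lemma~\ref{lemsep}, peeling off centroids one at a time and accumulating the ``separator vertices'' $v_1, \dots, v_k$ in $V_1$, until the remaining tree is small enough. First I would set up the recursion on the size of the tree: given $T$ with $|V(T)| = n$, I locate a centroid $v_1$ (using Jordan's theorem, exactly as in Lemma~\ref{lemsep}), whose deletion splits $T$ into subtrees each of size at most $\lceil n/2 \rceil$. I greedily assign these subtrees to two ``bins,'' aiming to keep each bin close to balanced; the key point is that since each subtree has size at most $\lceil n/2 \rceil$, a greedy bin-packing of the subtrees (always adding the next subtree to the currently smaller bin) keeps both bins bounded, but cannot by itself guarantee that the maximum is at most $\lceil n/2 \rceil$, because a single subtree of size exactly $\lceil n/2 \rceil$ together with the rest could overshoot. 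This is where recursion enters: the oversized bin is itself (together with possibly $v_1$) a forest that must be recursively re-balanced.

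The cleaner formulation I would actually carry out is the following recursive peeling. Maintain a current ``heavy'' subtree $T'$ that we still need to split, and a set $S \subseteq V_1$ of already-chosen separator vertices. At each step, find a centroid $v$ of $T'$, remove it, and let $T_1', \dots, T_m'$ be the resulting subtrees in decreasing size order. Using the greedy prefix argument from Lemma~\ref{lemsep} (choosing the smallest prefix whose cumulative size lands in the target window), I split the subtrees of $T'$ so that one side is added permanently to $V_2$ while the other side, whose size is at most half of $|T'|$, becomes the new heavy subtree $T'$ for the next round; the centroid $v$ is placed in $S \subseteq V_1$. Because every subtree coming off a centroid has size at most $\lceil |T'|/2 \rceil$, the size of the heavy subtree is at least halved each iteration, so after at most $\lceil \log_2 n \rceil$ rounds the heavy part is empty (or a single vertex), bounding $k = |S| \leq \lceil \log_2 n \rceil$. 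All edges crossing the final partition are incident either to a permanently-placed centroid (each in $S \subseteq V_1$) or internal to a committed side, so every crossing edge touches some $v_i \in V_1$.

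The step I expect to be the main obstacle is the accounting that simultaneously (a) drives the heavy part's size down geometrically and (b) keeps the committed-to-$V_2$ side and the growing $V_1$ from ever exceeding $\lceil n/2 \rceil$. The geometric decrease is immediate from the centroid bound, but ensuring \emph{perfect} balance is delicate: one must argue that the material shed into $V_2$ across all rounds, plus the centroid vertices and final heavy remnant kept in $V_1$, can be apportioned so neither side exceeds $\lceil n/2 \rceil$. I would handle this by tracking a size budget: at each round I move the prefix of subtrees that brings the deficient side as close to $\lfloor n/2 \rfloor$ as possible without crossing it, exploiting that each individual subtree is at most half the current heavy size (hence shrinking relative to the total), so the granularity of the pieces available for balancing becomes finer precisely as the remaining imbalance shrinks. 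The final balance check is then a routine induction on the rounds, but setting up the correct invariant so that both the logarithmic bound on $k$ and the $\lceil n/2 \rceil$ bound hold simultaneously is the crux of the argument.
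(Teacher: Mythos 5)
Your high-level plan coincides with the paper's: iterate the centroid argument of Lemma~\ref{lemsep}, maintain one unresolved ``heavy'' subtree whose size halves each round (giving $k \leq \lceil \log_2 n \rceil$ separator vertices), and commit the remaining pieces to the two sides to maintain balance. The structural half of your sketch is sound: each committed piece is a whole component hanging off a separator vertex, so every crossing edge is incident to some $v_i \in V_1$, and the halving of the heavy part follows from the centroid property. But there is a genuine gap, and you flag it yourself: you never formulate, let alone verify, the invariant that keeps both sides within $\lceil n/2 \rceil$ while one component must be held in reserve as the next heavy subtree. That is precisely where the paper's proof does its real work. It strengthens the induction with five conditions, the decisive ones being (2) $|V_1| \leq |V_2|$ and (5) $|V_1| + |T'| \geq n/2$ --- i.e., the lighter side together with the current heavy subtree $T'$ can always tip the balance. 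It then moves vertices and pendant subtrees \emph{one at a time} from $V_2$ to $V_1$ and stops at the first overshoot; if the last moved piece is a subtree, it is returned to $V_2$ and declared the new $T'$, which simultaneously re-establishes (5) and bounds $|T'| \leq n/2^{j+1}$. Your ``move the prefix that brings the deficient side as close to $\lfloor n/2 \rfloor$ as possible'' does not obviously close the induction, because the accounting is tight: the imbalance entering a round can be as large as roughly $2|T'|$, while the mass that becomes available to shed in that round is only about $|T'|$ minus the reserved heavy piece --- just barely enough, and only if an invariant of type (5) has been maintained exactly. Without tying the residual imbalance to the \emph{current} heavy size, your greedy can strand you with an imbalance that the remaining, geometrically shrinking pieces can no longer repair.

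A secondary deviation: you take the centroid of the heavy subtree itself as the new separator vertex $v_{j+1}$, whereas the paper selects $v_{j+1}$ among the vertices on the path from $v_j$ toward that centroid, moving path vertices and their pendant subtrees one by one and stopping mid-path the moment the balance flips; this is what keeps $v_1, \dots, v_k$ on a simple path inside $V_1$ (the paper's condition (1)) and makes the stop-at-first-overshoot bookkeeping available. Your centroid-as-separator variant appears repairable along the same lines, but as written the ``perfectly balanced'' half of the lemma is asserted rather than proved.
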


\begin{figure}[h]
  \centering
  \includegraphics[width=6cm]{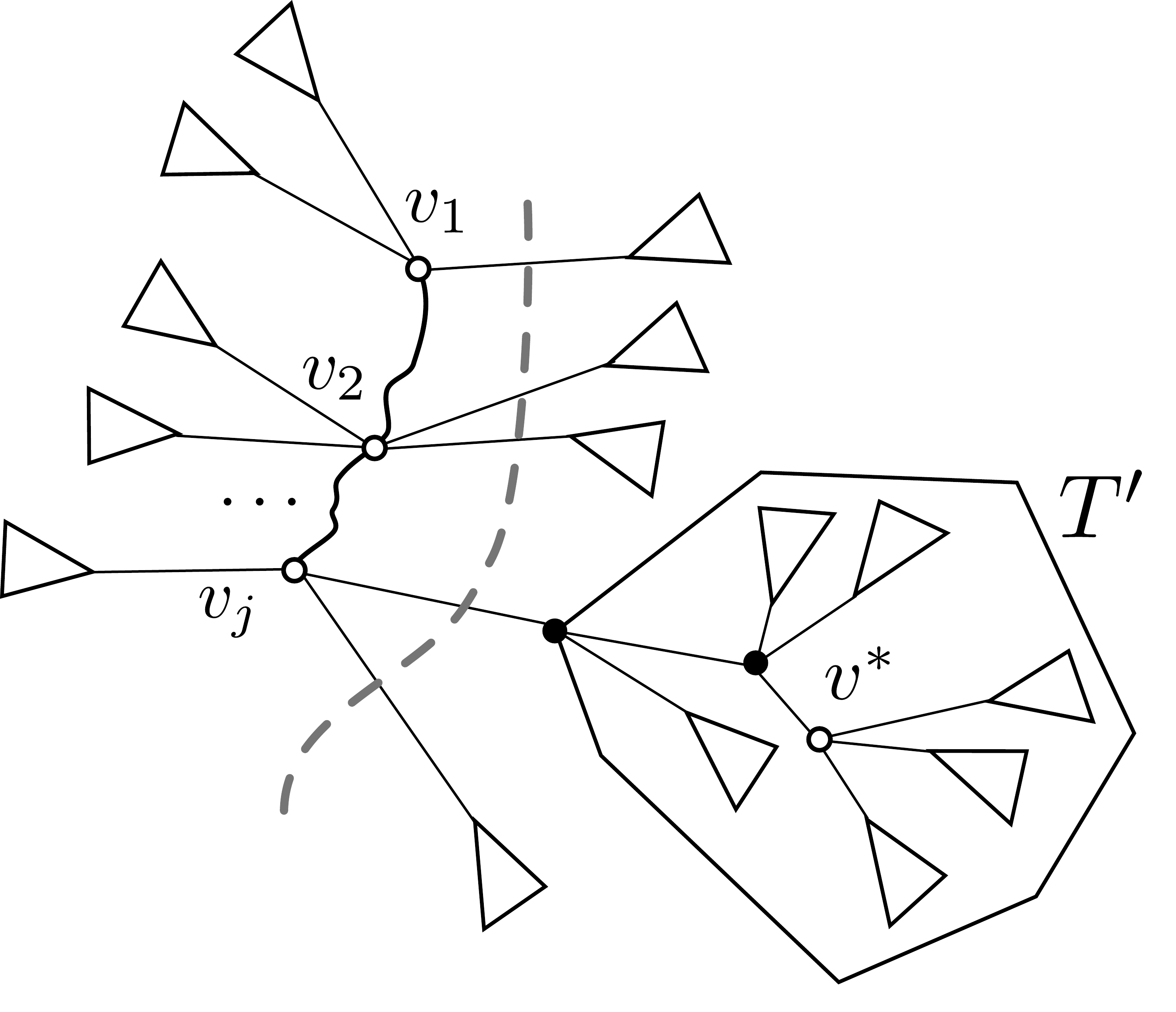}
  \caption{Illustration of the proof of Lemma~\ref{lemsep2}. \label{fig:ind}}
\end{figure}

\begin{proof}
  We proceed by finding the special vertices $v_j$ one-by-one for $j = 1, \dots, \floor{\log_2 n}$ and updating the partition $(V_1, V_2)$ in each iteration, maintaining the property that all edges between $V_1$ and $V_2$ are incident to a vertex in $\{v_1, \dots, v_j\}$. 
  We additionally require the following conditions in every iteration: (1) $v_1,\dots,v_j$ are, in this order, on a simple path entirely within $V_1$, and (2) $|V_1| \leq |V_2|$.
  Furthermore, if $|V_1| < |V_2|$, then we require that there is a subtree $T'$ of $T$ attached to~$v_j$ such that (3)~$T'$ is entirely contained in $V_2$, (4) $|T'| \leq n / {2^j}$, and (5) $|V_1| + |T'| > n/2$.
  In words, moving~$T'$ to the left side of the partition would make the left side larger than the right side. 
  The claim of the Lemma follows from conditions (1)--(5), when $j= \floor{\log_2 n}$.

  Let us prove that the conditions hold by induction on $j$.
  For the base case, $j=1$, we reuse our construction from Lemma~\ref{lemsep}.
  Let $v_1$ be the centroid of $T$, and iteratively move the subtrees of $T$ attached to $v_1$ to the initially empty $V_2$ from the initially full $V_1$, until $|V_2| \geq |V_1|$.
  Denote the last subtree that is moved as $T'$.
  Conditions (1)--(5) are easily verified. 

  For the inductive step, assume that $v_1, \dots, v_j$ have been found, satisfying conditions (1)--(5), and let $T'$ be the subtree attached to $v_j$ that fulfills conditions (3)--(5).
  Let $v^\ast$ be the centroid of~$T'$, and let $q_1, \dots, q_m$ be the internal vertices (in this order) on the path from $v_j$ to~$v^\ast$.
  Let $q_{m+1}=v^\ast$.
  Iteratively, for $\ell=1,\dots,m+1$, move the nodes $q_\ell$ from $V_2$ to $V_1$, and after moving each $q_\ell$, move one-by-one the subtrees of $T'$ attached to $q_\ell$, other than the one containing the centroid~$v^\ast$.
  By the definition of the centroid, and by (4), all subtrees that we move have size at most $n / {2^{j+1}}$.
  We stop when $|V_1| \geq |V_2|$. Observe that this must happen eventually, due to condition (5). 

  If the last move is a vertex $q_\ell$, then $|V_2| \leq |V_1| \leq |V_2| + 1$, so a perfectly balanced partition has been found; we label $q_\ell$ as $v_{j+1}$ and stop the entire process. In this case, verifying conditions (1)--(5) is no longer needed. 
  If the last move is a subtree, then we move it back to $V_2$, label it $T'$, and label the last moved vertex $q_\ell$ as $v_{j+1}$. 
  By construction, all edges across the partition are incident to one of the vertices $v_1, \dots, v_{j+1}$. This is because intermediate vertices $q_1, \dots, q_{\ell-1}$ have been moved across the partition together with all their pendent trees, they cannot thus be incident to any edges across the partition.
  (Recall that $v_1, \ldots, v_{j+1}$ form a simple path, but are not necessarily consecutive vertices on this path.)
     By construction, conditions (1)--(5) are satisfied with $v_{j+1}$, the new $T'$, and the current $V_1$, $V_2$. 
  See Fig.~\ref{fig:ind} for illustration. 
\end{proof}

Our improved algorithm {\sc dc-MV2} follows the same outline as {\sc dc-MV}, described in \S\,\ref{secdc}. We highlight the main differences.

The algorithm \emph{guesses} the partition $(V_1,V_2)$ of the vertex set $V$, according to a perfectly balanced separator of the (unknown) optimal rooted tree $T$, satisfying the conditions of Lemma~\ref{lemsep2}.
It also guesses the distinguished vertices $\{v_1, \dots, v_k\} \subseteq V_1$ to which all edges that cross the partition are incident.
Again, the separator splits $T$ into a tree with vertex set $V_1$ and a forest with vertex set~$V_2$.
Again, we consider separately the cases when the root $r = x_1$ of $T$ falls in $V_1$ or $V_2$.
In the first case,~$V_1$ induces a directed subtree of $T$ rooted at $r$, in the second case, $V_1$ induces a directed subtree of $T$ rooted at $v_i$, for some $1 \leq i \leq k$. (See Fig.\ \ref{fig:centroid2}.)

Whereas in {\sc dc-MV} all excess degrees belong to a single vertex, they are now distributed among $v_1, \dots, v_k$.
Therefore, we compute the \emph{total excess} of the vertices $v_1, \dots, v_k$ and we \emph{guess} their distribution.
Distributing the excess \emph{in-degree} is easy.
Assuming that our guesses were correct, at most one of the vertices $v_1,\dots,v_k$ has an incoming edge across the partition.
We just need to guess therefore, which vertex $v_i$ (if any) has an excess in-degree of $1$, while all others have excess in-degree $0$ (see Fig.\ \ref{fig:centroid2}).
For the excess \emph{out-degrees}, we need to distribute a total value of at most $n$ among $k = O(\log{n})$ vertices, amounting to $n^{O(\log{n})}$ possible choices.

We update the degrees of $v_1, \dots, v_k$ in $V_1$, removing their excesses.
Assuming our guesses to be correct, the subtree of $T$ induced by $V_1$ is optimal for its degree sequence, by Lemma~\ref{lem3}.

\begin{algorithm}[h]
  \caption{{\sc dc-MV2} for generating the optimal directed tree for a given degree sequence.\label{alg:best}}
  \begin{algorithmic}[1]
    \Statex \textbf{Input:} Vertex set $V$, degree sequence $\left(\dout, \din\right)$, and $n = |V|$.
    \Statex \textbf{Output:} $(T,c)$, where $T$ is optimum tree for $\delta^\text{out}, \delta^\text{in}$ and $c$ is its cost. 
    \Procedure{$\T\left[ V, \dout, \din\right]$}{}
    \If {$|V| \leq 9$}
	  \State Directly find optimum tree $T$ with cost $c$.
	  \State \textbf{return} ($T$, $c$)
    \Else
      \For {each partition $(V_1, V_2)$ of $V$, such that $|V_1|, |V_2| \leq \ceil{n/2}$}        
        \For {each $\{v_1, \dots, v_k\} \subseteq V_1$, where $k \leq \floor{\log_2 n}$}
            \State $\Bigl( \mathsf{excess}^{\text{out}}, \mathsf{excess}^{\text{in}} \Bigr) \leftarrow \Bigl( \displaystyle\sum_{{u \in V_1}} \dout(u) - |V_1| + 1, ~~\displaystyle\sum_{{u \in V_1}} \din(u) - |V_1| + 1 \Bigr)$
            \For {each \emph{excess-partition} $\Bigl\{ \mathsf{excess}^{\text{out}}_{v_i} \Bigr\}_{1 \leq i \leq k}$,~~ $\Bigl\{ \mathsf{excess}^{\text{in}}_{v_i} \Bigr\}_{1 \leq i \leq k}$}
              \State $\Bigl( {\delta^\text{out}}', {\delta^\text{in}}' \Bigr) \leftarrow \Bigl( \delta^\text{out}, \delta^\text{in} \Bigr)$        
              \State  Update degrees and distances appropriately. \hfill   $\triangleright$ Algorithm~\ref{alg:update}        
              \If {${\delta^\text{out}}', {\delta^\text{in}}'$ are \emph{valid} for directed trees } 
                \State {$c_1 \leftarrow \T\left[V_1, ~~ {\delta^\text{out}}'  \rvert_{V_1}, ~~{\delta^\text{in}}'  \rvert_{V_1} \right]$} 
                \State {$c_2 \leftarrow \T\left[V'_2, ~~ {\delta^\text{out}}'  \rvert_{V'_2}, ~~{\delta^\text{in}}'  \rvert_{V'_2} \right]$} 
		        \State $c \leftarrow c_1 + c_2$
		        \State $T \leftarrow$ merge $T_1$ and $T_2$ by identifying $v_i$ and $w_i$ and removing $w_0$
              \EndIf
            \EndFor
          \EndFor
        \EndFor
      \State \textbf{return} $(T,c)$ with minimum $c$
    \EndIf
  \EndProcedure
  \end{algorithmic}
\end{algorithm}

On the side of $V_2$ we have a collection of trees.
To obtain an instance of our problem (i.e.\ a single tree), we add virtual vertices $w_1, \dots, w_k$ to $V_2$, to play the role of $v_1, \dots, v_k$, and a virtual root~$w_0$ forcibly connected to the virtual vertices. 

To achieve the correct connections, we first set the out-degrees and in-degrees of $w_1, \dots, w_k$ to match the excesses of $v_1, \dots, v_k$.
This allows them to connect to all vertices of $V_2$ that $v_1,\dots,v_k$ connect to in $T$.
Observe that the location of the root of $T$ is determined by the in-degrees of the virtual vertices.
If all $w_i$ have in-degree $0$, then the root is on the left side of the partition (i.e.\ in $V_1$).
If $w_\ell$, for some $\ell\in\{1,\dots,k\}$ has in-degree $1$, then the root is on the right side of the partition (i.e.\ in $V_2$), in a subtree attached to $v_\ell$. 

In the first case, we make sure that all edges $(w_0,w_i)$ are directed away from~$w_0$.
In the second case, edge $(w_\ell, w_0)$ is directed towards $w_0$, and all edges $(w_0, w_i)$ for $i= 1,\dots,k$, and $i \neq \ell$ are directed away from $w_0$ (see Fig.\ \ref{fig:centroid2}).
To enforce these connections, we appropriately adjust the in-degrees and out-degrees of $w_0, \dots, w_k$.
We also set distances involving virtual vertices $w_0, \dots, w_k$, such as to forbid unwanted connections. (See Algorithm~\ref{alg:update}.)

After obtaining the optimal trees on the two sides (assuming the guesses were correct), we reconstruct $T$ by gluing the two trees together, removing $w_0$ with all its incident edges, and identifying $v_i$ with $w_i$ for all $i = 1,\dots,k$.
The optimality of the obtained tree follows by observing that a tree with smaller cost would also induce an improved solution on at least one of the subproblems, contradicting their optimality.
Again, we remark that ``guessing'' means iterating through all possible choices.
We refer to the detailed pseudocode described in Algorithm~\ref{alg:best} and the illustration in Fig.\ \ref{fig:centroid2}.
\begin{figure}[h!]
	\centering
	\includegraphics[width=11cm]{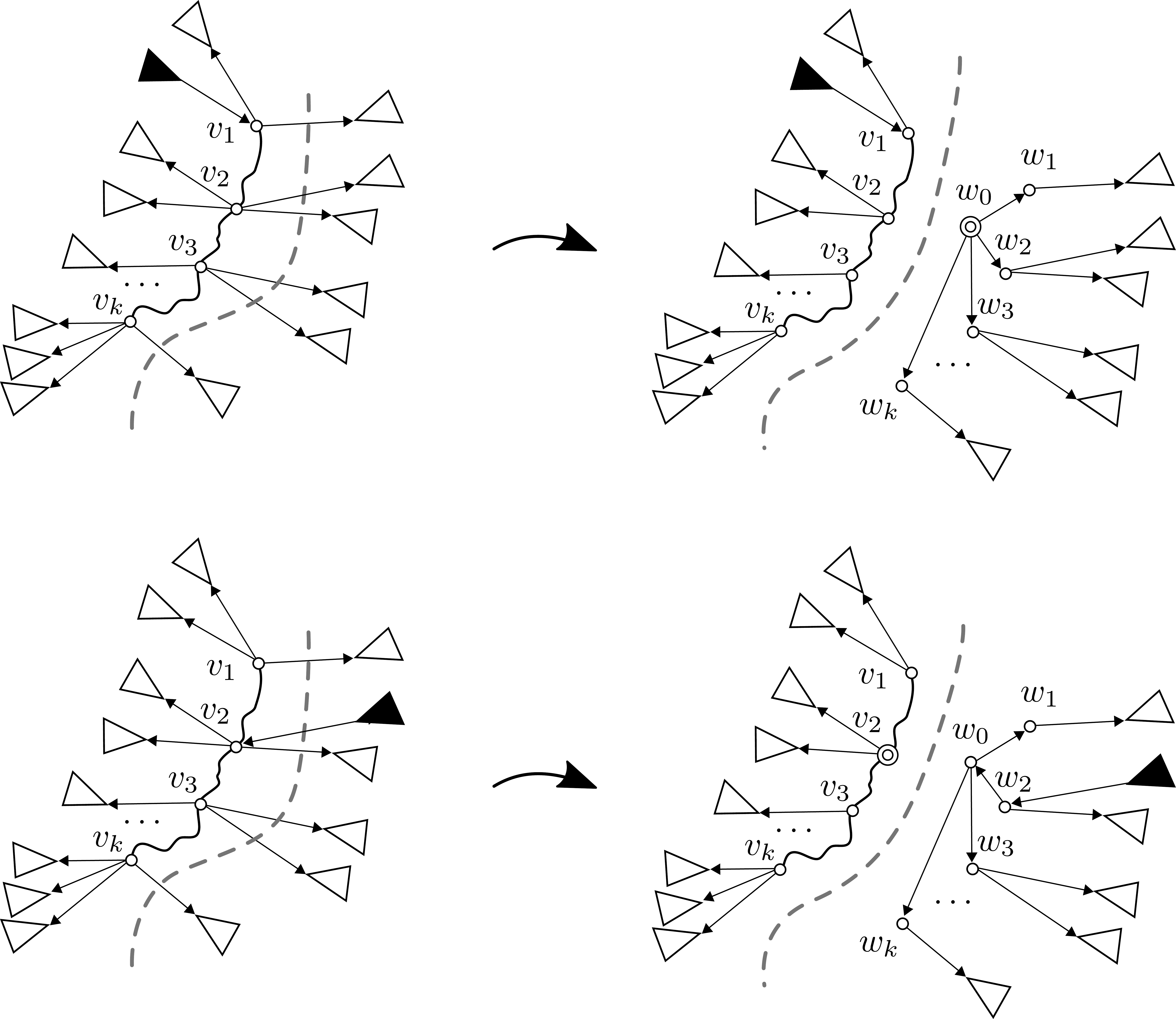}
	\caption{Illustration of {\sc dc-MV2}. (left) optimal tree and perfectly balanced partitioning; triangles indicate subtrees, shaded triangle contains root. (right) optimal trees on the two sides of the partitioning, virtual vertices $w_0,w_1,\dots,w_k$ shown; (above) root falls to the left side of the partition; (below) root falls to the right side of the partition; new roots shown as double circles.\label{fig:centroid2}}
\end{figure}

Some further remarks about Algorithm~\ref{alg:best} are in order.
The threshold 9 in Line~2 is to ensure progress; we need the recursive subproblem size $\ceil{n/2} + \floor{\log_2 n} + 1$ (the additive logarithmic term is due to the virtual vertices) to be less than $n$, which holds for $n>9$.
In practice, even some values $n \leq 9$ may be too large to be solved by brute force; in such cases we can switch back to Algorithm~\ref{alg:dcmv} (\textsc{dc-MV}) or Algorithm~\ref{alg:dpmv} (\textsc{dp-MV}), without affecting the asymptotic guarantees.

Lines $6,7$, and $9$ contain the ``guesses'', specifically, for the partition of $V$, for the choice of $v_1, \dots, v_k$, and for distributing the excess degrees among $v_1, \dots, v_k$.
Line 12 requires a test similar to the corresponding test in Algorithm~\ref{alg:dcmv}, that the guessed values correctly describe a tree on both sides of the partition. Line 11 contains the update of degrees and distances on the two sides of the partition, separated into Algorithm~\ref{alg:update} for readability. 

\paragraph*{Analysis of Algorithm~\ref{alg:best}.}
For a set $V$ of size $n$ we consider at most $2^n$ partitions (Line~6), at most $n^{O(\log{n})}$ choices for the $v_i$ (Line~7), and at most $n^{O(\log{n})}$ choices for their excess in-degrees and out-degrees (Line~9). 
We recur on subsets of size at most $\ceil{n/2} + \floor{\log_2 n} + 1$.

All remaining operations take $O(n)$ time.
The run time $t(n)$, can thus be bounded as: 
\begin{eqnarray*}
  t(n) & \leq & 2^n  \cdot n^{O({\log{n}})}\cdot 2 t(\ceil{n/2} + \floor{\log_2 n}+1) \\
          &   =  & n^{O(\log^2 n)} \cdot 2^{n \left( \sum_k{(1/2)^k}\right)+ O(\log^2{n})}   \\
          &   =  & n^{O(\log^2 n)} \cdot O(2^{2n}) \enspace .
\end{eqnarray*}
Including the transportation problem, the overall run time of Algorithm~\ref{alg:outline} using {\sc dc-MV2} is therefore $\DS(n) \cdot (4^n n^{O(\log n)} + n^3 \log n) + O(n^3 \log k) = O^*(16^{n+o(n)} + \log k)$, with essentially the same space complexity as before. 

We remark that in instances of MV-TSP where some of the multiplicities are smaller than $n-1$, not all degree sequences of trees are realizable.
Pruning out unrealizable degree sequences can lead to improvements in efficiency.
As a special case, suppose that all multiplicities are equal to $1$ (i.e.\ the standard TSP).
Then, the only possible tree is a \emph{path}, with a unique degree sequence, up to the choice of the leaf.
In this case the run time of our approach reduces to $4^{n + o(n)}$, matching the Gurevich-Shelah algorithm~\cite{GurevichShelah1987}.

\section{Discussion}
\label{sec:discussion}
We described three new algorithms, along with variations, for the many-visits TSP problem.
In particular, we showed how the problem can be solved in time single-exponential in the number~$n$ of cities, while using space only polynomial in $n$.
This yields the first improvement over the Cosmadakis-Papadimitriou algorithm in more than 35 years.

It remains an interesting open question to improve the bases of the exponentials in our run times, even in the special case when all edge costs are equal to $1$ or $2$.
Such instances of MV-TSP arise e.g.\ in the {\sc Maximum Scatter TSP} application by Kozma and M{\"o}mke~\cite{KozmaMomke2017}. Recent algebraic techniques~\cite{Algebraization, Golovnev, Koivisto} may be of help.


In practice, one may reduce the search space of our algorithms via heuristics, for instance by forcing certain (directed) edges to be part of the solution.
An edge $(i,j)$ is part of the solution if the optimal tour visits it \emph{at least once}.
This may be reasonable if the edges in question are very cheap.

\begin{algorithm}[h]
  \caption{Updating degrees and distances in {\sc dc-MV2}: Line 12 of Algorithm~\ref{alg:best}.\label{alg:update}}
    \begin{algorithmic}[1]

 \State ${\delta^\text{out}}' (v_i) \leftarrow {\delta^\text{out}}' (v_i) - \mathsf{excess}^{\text{out}}_{v_i}$, \ \  ${\delta^\text{in}}' (v_i) \leftarrow {\delta^\text{in}}' (v_i) - \mathsf{excess}^{\text{in}}_{v_i}$, \ \ for $1 \leq i \leq k$
          \State $V'_2 \leftarrow V_2 \cup \{w_0,w_1,\dots,w_k\}$ \hfill$\triangleright$ virtual vertices

 \State ${\delta^\text{out}}' (w_i) \leftarrow  \mathsf{excess}^{\text{out}}_{v_i}$, \ \  ${\delta^\text{in}}' (w_i) \leftarrow \mathsf{excess}^{\text{in}}_{v_i}+1$, \ \  for all $1 \leq i \leq k$
		   \State $d_{w_0, w_i} \leftarrow 0$, \ \ $d_{w_i, w_0} \leftarrow 0$, \ \  for all $1 \leq i \leq k$
   		   \State $d_{w_0, u} \leftarrow \infty$, \ \ $d_{u,w_0} \leftarrow \infty$, \ \ for all $u \in V_2 \setminus \{w_1,\dots,w_k\}$
		   \State $d_{w_i, w_j} \leftarrow \infty$, \ \ for all $1 \leq i,j \leq k$
		   \State $d_{w_i, u} \leftarrow d_{v_i, u}$,\ \ and \  $d_{u, w_i} \leftarrow d_{u, v_i}$ \ for all $1 \leq i \leq k$, \ and \ $u \in V_2 \setminus \{w_0,\dots,w_k\}$
		   
          \State ${\delta^\text{out}}' (w_0) \leftarrow k$, \ \ ${\delta^\text{in}}' (w_0) \leftarrow 0$

   \If {$\mathsf{excess}^{\text{in}}_{v_i} > 0$ for some $i\in\{1,\dots,k\}$} \hfill$\triangleright$ root is in $V_2$, in subtree attached to $v_i$ 

	\State ${\delta^\text{out}}' (w_i) \leftarrow {\delta^\text{out}}' (w_i) + 1$
	\State ${\delta^\text{in}}' (w_i) \leftarrow {\delta^\text{in}}' (w_i) - 1$
	\State ${\delta^\text{out}}' (w_0) \leftarrow {\delta^\text{out}}' (w_0) - 1$
	\State ${\delta^\text{in}}' (w_0) \leftarrow {\delta^\text{in}}' (w_0) + 1$

   \EndIf

 \end{algorithmic}
\end{algorithm}

\begin{acks}
  Research of L.K.\ supported by ERC Consolidator Grant No 617951 and DFG Grant KO 6140/1-1. Research of M.M.\ supported by DFG Grant MN 59/4-1. We thank the anonymous reviewers for their many insightful remarks.
\end{acks}


\clearpage

\bibliographystyle{ACM-Reference-Format}
\bibliography{submission}

\newpage

\appendix

\section{Deferred subroutines}
\label{combalg}

\begin{algorithm}
  \caption{Generating all possible $r$-subsets of $\{1, \dots, n\}$.\label{alg:combinations}}
  \begin{algorithmic}[1]
    \State \textbf{Input:} Positive integers $n$ and $r$.
    \State \textbf{Output:} A \emph{generator} of all possible $r$-subsets of $\{1, \dots, n\}$.
    \Procedure{combinations}{$n$, $r$}
      \State $\mathsf{comb} \leftarrow [1, \dots, r]$
      \State \textbf{yield} $\mathsf{comb}$
      \While{\textbf{true}}
        \State{$i \leftarrow$ last index such that $\mathsf{comb}_i \neq n-r+i$, if no such index exists, \textbf{break}}
        \State{$\mathsf{comb}_i \leftarrow \mathsf{comb}_{i} + 1$}
        \For{every index $j \leftarrow i{+}1, \dots, r$}
          \State $\mathsf{comb}_j \leftarrow \mathsf{comb}_{j-1} + 1$
        \EndFor
        \State \textbf{yield} $\mathsf{comb}$
      \EndWhile
    \EndProcedure
  \end{algorithmic}
\end{algorithm}

The algorithm $\textsc{combinations}$ implements algorithm $\textsc{NEXKSB}$ by Nijenhuis and Wilf~\cite[page 26]{NijenhuisWilf1978}.
It takes two integers as input, $n$ and $r$, and generates all (ordered) subsets of size $r$, of the base set $\{1, \dots, n\}$.
It starts with the set $[1, 2, \dots, r]$ and generates all $r$-subsets in lexicographical order, up to $[n-r+1, n-r+2, \dots, n-r+r] = [n-r+1, n-r+2, \dots, n]$.
In every iteration, it increments the rightmost entry $\mathsf{comb}_i$ not equal to $n-r+i$, and makes $\mathsf{comb}_j$ equal to $\mathsf{comb}_{j-1}+1$ for all $j$ indices between $[i+1, r]$.
The algorithm stops when there are no such indices $i$, which happens when reaching $[n-r+1, \dots, n]$. 

An example is shown in Fig.~\ref{fig:combinations}.
The corresponding integer sequence would be $[1, 1, 2, 0, 0]$, however $\textsc{combinations}$ returned the sequence $[2, 4, 7, 8]$, that is, a sequence of the positions (separating bars) of the integer sequence above.
In order to obtain the actual degree sequence, we use a short script \textsc{combinationToSequence} in Algorithm~\ref{alg:combination_to_sequence} that converts the sequence with the positions of the bars to the degree sequence.

\begin{figure}[h]
  \centering
  \resizebox {.3\textwidth} {!} {
    \begin{tikzpicture}
      \tikzstyle{star}=[circle,draw,fill, minimum size=14pt, inner sep=0pt]
      \tikzstyle{bar}=[rectangle,draw,fill,rounded corners=0.5pt, minimum width=2pt, minimum height=20pt, inner sep=0pt]
      \node[star, label={[label distance=8pt] \small 1}] (p1) at (0,0) {};
      \node[bar,  label={[label distance=5pt] \small 2}, right=6pt of p1] (p2) {};
      \node[star, label={[label distance=8pt] \small 3}, right=6pt of p2] (p3) {};
      \node[bar,  label={[label distance=5pt] \small 4}, right=6pt of p3] (p4) {};
      \node[star, label={[label distance=8pt] \small 5}, right=6pt of p4] (p5) {};
      \node[star, label={[label distance=8pt] \small 6}, right=6pt of p5] (p6) {};
      \node[bar,  label={[label distance=5pt] \small 7}, right=6pt of p6] (p7) {};
      \node[bar,  label={[label distance=5pt] \small 8}, right=6pt of p7] (p8) {};
    \end{tikzpicture} }
    \caption{Sequence $[2, 4, 7, 8]$ representing the degree sequence $[1, 1, 2, 0, 0]$}
\label{fig:combinations}
\end{figure}
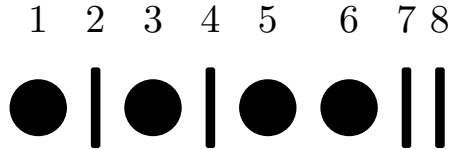

\begin{algorithm}[h]
  \caption{Converting $[a_1, \dots, a_m]$ into $[a_1, a_2-a_1-1, \dots, r+m-a_m]$.\label{alg:combination_to_sequence}}
  \begin{algorithmic}[1]
    \State \textbf{Input:} List of positions $a=[a_1, \dots, a_m]$, integer $r$.
    \State \textbf{Output:} A sequence of $m+1$ integers that sum up to $r$.
    \Procedure{combinationToSequence}{$a$, $r$}
      \State $\mathsf{seq}_1 \leftarrow a_1-1$
      \For{ every index $i \leftarrow 2,\dots,m$}
        \State $\mathsf{seq}_i \leftarrow a_{i}-a_{i-1}-1$
      \EndFor
      \State $\mathsf{seq}_{m+1} \leftarrow r+m-a_{m} $
      \State \textbf{return} $\mathsf{seq}$
    \EndProcedure
  \end{algorithmic}
\end{algorithm}

Finally, the procedure {\sc Distribute}$(n,k)$ calls {\sc CombinationToSequence}$(a,n)$, for each output $a$ of {\sc Combinations}$(n+k,n)$.

\end{document}